\documentclass[lettersize,journal]{IEEEtran}
\usepackage{amsmath,amsfonts}
\usepackage{algorithmic}
\usepackage{algorithm}
\usepackage{array}
\usepackage{color}
\usepackage[caption=false,font=normalsize,labelfont=sf,textfont=sf]{subfig}
\usepackage{textcomp}
\usepackage{stfloats}
\usepackage{url}
\usepackage{verbatim}
\usepackage{graphicx}
\usepackage{cite}
\usepackage[colorlinks=true,linkcolor=red,citecolor=blue,urlcolor=blue,]{hyperref}
\usepackage{cleveref}
\newtheorem{lemma}{Lemma}
\newtheorem{theorem}{Theorem}
\newtheorem{remark}{Remark}
\newtheorem{corollary}{Corollary}

\crefname{figure}{Fig.}{Figs.} 
\begin{document}

\title{Hybrid Pinching-Fluid Antenna Assisted Wireless Communications: Modeling and Performance Analysis}

\author{{\small Xiao Lin, Yizhe Zhao,~\IEEEmembership{Member,~IEEE}, Xiangyang Wang, Halvin Yang, \IEEEmembership{Member,~IEEE}, and Bingxin Zhang,~\IEEEmembership{Member,~IEEE}}
\thanks{Xiao Lin, Yizhe Zhao and Xiangyang Wang are with the School of Information and Communication Engineering, University of Electronic Science and Technology of China, Chengdu 611731, China (e-mail: xiaolin@std.uestc.edu.cn; yzzhao@uestc.edu.cn; w2714997021@163.com). }
\thanks{Halvin Yang is with the Department of Electrical and Electronic Engineering, Imperial College London, London, U.K (e-mail: halvin.yang@imperial.ac.uk).}
\thanks{Bingxin Zhang is with the State Key
	Laboratory of Novel Software Technology, Nanjing University, Nanjing,
	210008, China, Institute of Intelligent Networks and Communications
	(NINE), Collaborative Innovation Center of Novel Software Technology
	and Industrialization, and School of Intelligent Software and Engineering, Nanjing University (Suzhou Campus), Suzhou, 215163, China (email:
	bxzhang@nju.edu.cn).}
}


\maketitle
\begin{abstract}
Reconfigurable-antenna systems have received increasing attention for their
ability to adapt wireless channels. However, existing architectures exhibit
scenario-dependent limitations: fluid antennas provide strong diversity gains
in rich-scattering environments but offer limited benefits under line-of-sight
(LoS)-dominant conditions, while pinching antennas can effectively reduce path
loss by adjusting the radiation point along a waveguide, yet perform poorly in
severe non-LoS (NLoS) scenarios. This letter proposes a hybrid pinching-fluid antenna system (HPFAS), where 
pinching antenna (PA) is employed at the transmitter and a fluid antenna (FA) is used at
the receiver to jointly exploit LoS enhancement and spatial diversity. 
A tractable channel model is developed, and outage probability
expressions are derived for both single-user and multi-user scenarios.
\begingroup
\color{red}
High-SNR asymptotic analysis and numerical comparisons are further
conducted to characterize the roles of the two antenna mechanisms and
provide scheme-selection insights.
Simulation results validate the analysis and
\color{red}
demonstrate the performance benefits of HPFAS.
\endgroup
\end{abstract}

\begin{IEEEkeywords}
Pinching antenna, fluid antenna, performance analysis.
\end{IEEEkeywords}

\section{Introduction}
The evolution toward beyond-5G/6G networks is characterized by increasingly heterogeneous propagation conditions and service requirements~\cite{10054381}, where a single antenna architecture is often insufficient to guarantee robust performance. In dense user-centric deployments, both strong LoS links and reliable operation under rich scattering are required, motivating flexible antenna designs capable of adapting to the radio environment.

Fluid antenna systems (FAS) have recently emerged as a promising paradigm that exploits multiple candidate ports within a compact region, enabling dynamic port selection to reduce spatial correlation and achieve diversity gains~\cite{9264694,11302793}. This flexibility further enables fluid antenna multiple access (FAMA), which simplifies interference management without precoding or interference cancellation~\cite{9650760,10066316}.

However, FAS is limited by its restricted spatial movement range and is mainly effective in rich scattering environments.  Dielectric waveguide-based pinching antennas have been proposed to create controllable radiation points along a guiding structure, enabling the formation of strong user-centric LoS links with low-cost and scalable hardware \cite{10945421}. By adjusting the positions or activation of pinching elements, the transceiver can effectively reshape the link geometry, which is particularly attractive for corridor-like deployments, hotspot coverage, and high-frequency systems. Recent works have demonstrated significant performance gains by optimizing pinching antenna locations or refining antenna positions in both single-user and multi-user downlink scenarios \cite{10896748,10981775,11016750}.

{\color{black}Motivated by these complementary characteristics, this letter proposes a hybrid pinching–fluid antenna system (HPFAS), where pinching antennas are utilized to enhance LoS links,  while fluid antennas exploit spatial diversity via port selection to mitigate small-scale fading.} {\color{red}The proposed HPFAS serves as an analytical framework for studying the
	joint effect of transmit-side link-geometry adaptation and receive-side
	fading diversity.} The main contributions of this letter are summarized as follows:
\begin{itemize} 
	\item We propose a HPFAS that jointly exploits pinching antenna (PA)-enabled LoS
	link enhancement and fluid-antenna (FA)-assisted spatial diversity.
	\item We develop a tractable channel model that captures the coexistence of pinching-induced LoS components and spatially varying small-scale fading at the fluid antenna, enabling analysis under Rician fading. 
	\item We derive analytical outage probability expressions and their
	approximations,
	\begingroup
	\color{red}
	as well as high-SNR asymptotic results,
	\endgroup
	and validate them via Monte Carlo simulations. The proposed approximation
	significantly reduces computational complexity while maintaining
	satisfactory accuracy.
	\begingroup
	\color{red}
	Numerical comparisons with PA-only and FA-only schemes are further
	provided to clarify the roles of the two mechanisms and offer
	scheme-selection insights.
	\endgroup
\end{itemize}
\section{System Model and Performance Analysis for Single-User HPFAS}
As illustrated in \Cref{fig:model}, we consider a downlink single-user HPFAS where a base station (BS), equipped with a single PA on a waveguide, serves a user equipment (UE) equipped with a linear FA of length $W$ with $N$ ports\footnote{\color{black}The single-user model will be extended to a general multi-user scenario in the next section.}. The transmitter can dynamically adjust the PA position along the waveguide, while the receiver adaptively selects one FA port to improve link quality.
A three-dimensional Cartesian coordinate system is adopted to describe the network geometry. The waveguides are aligned parallel to the $x$-axis at height $h$. The UE is located at $\boldsymbol{\psi}=(\phi_x,\phi_y,0)$, where $(\phi_x,\phi_y)$ is uniformly distributed over a $D_1 \times D_2$ rectangular area on the ground, while the PA is located at $\boldsymbol{\psi}_{p}^{\text{Pin}}
=
(\phi_{x,p},0,h)$.
\subsection{Wireless channel model}
{\color{black}We consider a wireless propagation environment where the downlink channel consists of a deterministic LoS component and multiple NLoS components induced by randomly distributed scatterers, as commonly observed in urban, indoor, and mmWave scenarios. Accordingly, the overall channel is modeled as the superposition of LoS and NLoS components, capturing both the dominant direct path and multipath effects.
\begin{figure}[t]
	\centering
	\includegraphics[width=2.5in]{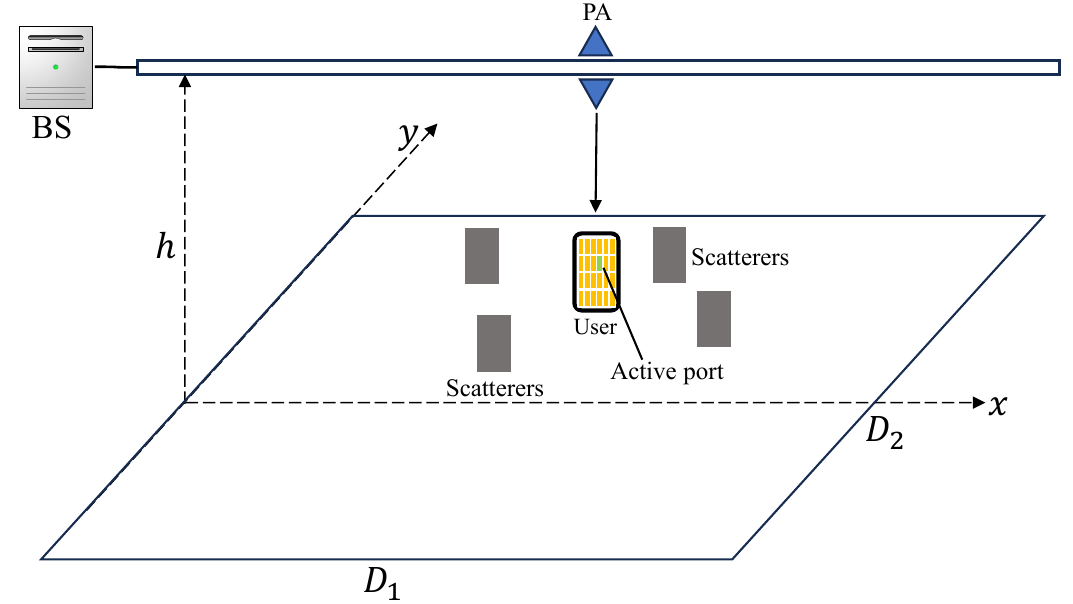}  
	\caption{System model of the proposed HPFAS {\color{black}in the single-user scenario.}}
	\label{fig:model}
\end{figure}
	Following the geometric spherical wavefront model~\cite{10945421}, the LoS channel between the PA and FA is given by\footnote{\color{black}The considered channel includes both waveguide transmission from the feeding point to the PA and subsequent free-space propagation to the UE.}
	\begin{equation}
		h^{\text{LoS}} 
		= \sqrt{\frac{\eta_0}{d^{\varepsilon}}}
		e^{-j\frac{2\pi d}{\lambda}}
		e^{-j\frac{2\pi d_0}{\lambda_g}},
	\end{equation}
	where $\eta_0 = \frac{\lambda^2}{(4\pi)^2}$, $\varepsilon$ is the path loss exponent, $d$ is the PA--FA distance, and $d_0$ is the distance between the PA and the waveguide feeding point. $\lambda$ and $\lambda_g=\lambda/n_{\text{eff}}$ denote the free-space and guided wavelengths, respectively.
	The normalized LoS phase term is defined as
	\begin{equation}
		\bar{h}^{\text{LoS}}
		= e^{-j\frac{2\pi d}{\lambda}}
		e^{-j\frac{2\pi d_0}{\lambda_g}},
	\end{equation}
	 Thus, $h^{\text{LoS}}=\sqrt{\beta(d)}\,\bar{h}^{\text{LoS}}$, where $d$  denotes the distance between the
	PA and the FA, and $\beta(d)=\eta_0 d^{-\varepsilon}$.
	
	Under the assumption of infinitely many scatterers, the NLoS component follows Rayleigh fading. Based on~\cite{10623405}, the NLoS channel between the PA and the $n$-th FA port is modeled as
	\begin{multline}
		h_{n,b(n)}^{\text{NLoS}} 
		= \sqrt{1-\mu^2}\, x_{n,b(n)} + \mu\, x_{b(n)} \\
		+ j\!\left(\sqrt{1-\mu^2}\, y_{n,b(n)} + \mu\, y_{b(n)}\right),
	\end{multline}
	where $\mu$ is close to $1$, and all Gaussian variables are i.i.d. with zero mean and unit variance. The index $b(n)$ denotes the correlation block structure with $\sum_{b=1}^{B} L_b = N$. Finally, the overall channel is modeled as a Rician channel, which is given by
	\begin{equation}
		h_{n,b(n)} 
		= \sqrt{\beta(d)}
		\left(
		\sqrt{\frac{\kappa}{\kappa+1}}\, \bar{h}^{\text{LoS}}
		+ \sqrt{\frac{1}{2(\kappa+1)}}\, h_{n,b(n)}^{\text{NLoS}}
		\right),
	\end{equation}
	where $\kappa$ denotes the Rician factor.}
\subsection{Performance Analysis of HPFAS}
The received signal $R$ at the $n$-th port of the UE is given by \eqref{signal},
	\begin{equation}\label{signal}
	\begin{aligned}
		R_n &= \sqrt{P_t}	h_{n,b(n)}s+z\\
	&	=\sqrt{P_t\beta(d)}
		\left(
		\sqrt{\frac{\kappa}{\kappa+1}}\, \bar{h}^{\text{LoS}}
		+ \sqrt{\frac{1}{2\left(\kappa+1\right)}}\, h_{n,b(n)}^{\text{NLoS}}
		\right)s+z
	\end{aligned}
\end{equation}
where $s \in \mathbb{C} $ is the downlink signal intended for UE having the unit power, \textit{i.e.},  $\mathbb{E}[\vert s\vert^2]=1$, and $z$ is the additive noise modeled as a circularly symmetric complex Gaussian random variable with zero mean and variance $\sigma^2$.  Under the alignment assumption $\phi_x,p = \phi_x$\footnote{\color{black}This alignment represents the optimal geometric configuration in the single-user case, as it minimizes the propagation distance and maximizes the effective channel gain, thereby achieving the highest SNR.}, the Euclidian distance reduces to $d=\sqrt{\phi_y^2+h^2}$. Then, the received signal-to-noise ratio (SNR) can be expressed as
\begin{equation}\label{gamma}
		\begin{aligned}
	\gamma_{n,b(n)}=\frac{P_t\eta_0\left\vert\sqrt{\frac{\kappa}{\kappa+1}}e^{-j\frac{2\pi d}{\lambda} -j\frac{2\pi d_0}{\lambda_g}  }+\sqrt{\frac{1}{2\left(\kappa+1\right)}}h_{n,b(n)}^{\text{NLoS}}\right\vert^2}{\sigma^2\left(\phi_y^2+h^2\right)^{\frac{\varepsilon}{2}}},
\end{aligned}
\end{equation}

 Accordingly, the outage probability is
defined as the probability that the maximum received SNR among all candidate
ports falls below a predefined threshold $\gamma_{\text{th}}$, yielding
\begin{equation}
	P_{\text{out}} = \text{Pr}\left(\max_n \gamma_{n,b(n)}<\gamma_{\text{th}}\right).
\end{equation}
After that, the outage probability of the proposed HPFAS is obtained in the following theorem.
{\color{black}\begin{theorem}
	The outage probability of the proposed HPFAS scheme is given by
	\begin{multline}\label{OP-FAPA}
		P_{\text{out}}	=\int_{0}^{\frac{D_2}{2}} \frac{2}{D_2}\prod_{b=1}^{B} \int_{0}^{\infty} \frac{1}{2} \exp\left(-\frac{r_b + \frac{2\kappa}{\mu^2}}{2}\right) I_0\left(\sqrt{\frac{2\kappa r_b}{\mu^2}}\right) \\
		\times\left[1 - Q_1\left(\sqrt{\frac{\mu^2 r_b}{1-\mu^2}}, \sqrt{C}\right)\right]^{L_b} dr_bd\phi_y,
	\end{multline}
	where $Q_1(\cdot,\cdot)$ denotes the first-order Marcum 
	$Q$-function and $C = \frac{2\left(\kappa+1\right)\gamma_{\text{th}}\sigma^2\left(\phi_y^2+h^2\right)^{\frac{\varepsilon}{2}}}{P_t\eta_0\left(1-\mu^2\right)}$. 
\end{theorem}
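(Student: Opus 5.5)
The plan is to condition first on the user position $\phi_y$, then on the block-common Gaussian variables $(x_b,y_b)$, and finally integrate both out. Since $(\phi_x,\phi_y)$ is uniform over the $D_1\times D_2$ rectangle and the PA is aligned with $\phi_x$, only $\phi_y$ enters the SNR. By the symmetry of $d=\sqrt{\phi_y^2+h^2}$ in $\phi_y$, we may take $\phi_y$ uniform on $[0,D_2/2]$ with density $2/D_2$, which produces the outer integral. For fixed $\phi_y$ the outage event $\max_n\gamma_{n,b(n)}<\gamma_{\text{th}}$ is, by \eqref{gamma},
\begin{equation*}
\bigl|\sqrt{2\kappa}\,\bar h^{\text{LoS}}+h^{\text{NLoS}}_{n,b(n)}\bigr|^2<\tfrac{2(\kappa+1)\gamma_{\text{th}}\sigma^2(\phi_y^2+h^2)^{\varepsilon/2}}{P_t\eta_0}
\end{equation*}
for all $n$. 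Dividing by $1-\mu^2$, this is the same as $|\cdot|^2/(1-\mu^2)<C$ with $C$ as in the statement.

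Next, write $\bar h^{\text{LoS}}=e^{j\theta}$. The LoS term is a deterministic complex constant of modulus $\sqrt{2\kappa}$. Ports in different blocks are independent, because they share no common variable. Within block $b$, conditioned on $(x_b,y_b)$, the quantities $\bigl(\sqrt{2\kappa}\cos\theta+\mu x_b+\sqrt{1-\mu^2}x_{n,b}\bigr)/\sqrt{1-\mu^2}$ and the corresponding imaginary part are independent unit-variance Gaussians. So each normalized port power is conditionally noncentral chi-square with two degrees of freedom. Its noncentrality is
\begin{equation*}
\lambda_b=\frac{(\sqrt{2\kappa}\cos\theta+\mu x_b)^2+(\sqrt{2\kappa}\sin\theta+\mu y_b)^2}{1-\mu^2}.
\end{equation*}
Its CDF at $C$ is $1-Q_1(\sqrt{\lambda_b},\sqrt{C})$. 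The $L_b$ ports in the block are conditionally i.i.d., which gives $[1-Q_1(\sqrt{\lambda_b},\sqrt C)]^{L_b}$. The product over $b$ then follows from independence across blocks.

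It remains to average over $(x_b,y_b)$. Define $r_b=\bigl[(\sqrt{2\kappa}\cos\theta+\mu x_b)^2+(\sqrt{2\kappa}\sin\theta+\mu y_b)^2\bigr]/\mu^2$, so that $\lambda_b=\mu^2r_b/(1-\mu^2)$. This is the squared modulus of a unit-variance-per-component complex Gaussian with mean of modulus $\sqrt{2\kappa}/\mu$, independent of $\theta$. Hence $r_b$ is noncentral chi-square with 2 degrees of freedom and noncentrality $2\kappa/\mu^2$. Its density is $\tfrac12\exp\bigl(-(r_b+2\kappa/\mu^2)/2\bigr)I_0\bigl(\sqrt{2\kappa r_b/\mu^2}\bigr)$, which is exactly the kernel in \eqref{OP-FAPA}. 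Integrating over $r_b\in[0,\infty)$ for each block and then over $\phi_y$ yields the theorem.

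The main obstacle is the bookkeeping that splits the deterministic LoS mean correctly. All of it must be absorbed into the block-common variable, so that the per-port residual is centered given $r_b$. One must also check that the phase $\theta$ drops out, which follows from the rotational invariance of the circular Gaussian $(x_b,y_b)$. The factor $1/\sqrt{2(\kappa+1)}$ in the channel must likewise be matched to the normalization producing $C$. A secondary point is justifying the reduction to $[0,D_2/2]$: this uses the implied placement of the waveguide at $y=0$ through the centre of the area, with $\phi_y$ uniform on $[-D_2/2,D_2/2]$.
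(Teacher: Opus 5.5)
Your proposal is correct and follows essentially the same route as the paper. The paper's block-common variables $\tilde{x}_{b},\tilde{y}_{b}$ are exactly your $x_b+\sqrt{2\kappa}\cos\theta/\mu$ and $y_b+\sqrt{2\kappa}\sin\theta/\mu$, so the LoS mean is absorbed into them in the same way. Each port power is then conditionally noncentral chi-square with noncentrality $\mu^2 r_b/(1-\mu^2)$, $r_b$ is noncentral chi-square with noncentrality $2\kappa/\mu^2$, and you take the product over blocks and average over $\phi_y$. The only cosmetic difference is that you write the conditional CDF $[1-Q_1(\cdot,\sqrt{C})]^{L_b}$ directly, whereas the paper first writes the joint PDF and then states that the CDF follows.
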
 
\begin{IEEEproof}
We first define the random variable
	\begin{equation}\label{Xn}
		\Theta_n = \left(x_{n,b(n)} + \frac{\mu \tilde{x}_{b(n)}}{\sqrt{1-\mu^2}}\right)^2+\left(y_{n,b(n)} + \frac{\mu \tilde{y}_{b(n)}}{\sqrt{1-\mu^2}}\right)^2,
	\end{equation}
	where
	$\tilde{x}_{b(n)} \sim\mathcal{N}\left(\frac{\sqrt{2\kappa}}{\mu}\cos\left(\frac{2\pi d}{\lambda}+\frac{2\pi d_0}{\lambda_g}\right),1\right)$, $\tilde{y}_{b(n)}\sim\mathcal{N}\left(\frac{\sqrt{2\kappa}}{\mu}\sin\left(\frac{2\pi d}{\lambda}+\frac{2\pi d_0}{\lambda_g}\right),1\right)$.
	
Then, the outage probability can be rewritten as
\begin{equation}
		P_{\text{out}}= \text{Pr}\left(\max_n \Theta_n < \frac{2\left(\kappa+1\right)\gamma_{\text{th}}\sigma^2\left(\phi_y^2+h^2\right)^{\frac{\varepsilon}{2}}}{P_t\eta_0\left(1-\mu^2\right)}\right).
\end{equation}

 Let $r_b=\tilde{x}_b^2+\tilde{y}_b^2$, $\Theta_n$.  Conditioned on $r_b$, $\Theta_n$ follows a non-central chi-square distribution with two degrees of freedom. Its conditional
  probability density function  (PDF) is given by
\begin{equation}
	f_{\Theta_n|r_b}(\theta) = \frac{1}{2} \exp\left(-\frac{\theta + \frac{\mu^2}{1 - \mu^2}r_b}{2}\right) I_0\left(\sqrt{\frac{\mu^2 r_b \theta}{1 - \mu^2}}\right),
\end{equation}
where $I_0(\cdot)$ denotes the modified Bessel function of the first kind with order zero. Note that each $r_b$ for $b=1,\ldots,B$ is a non-central chi-square distribution with two
degrees of freedom, and its PDF is
 \begin{equation}
	f_{r_b}(r_b) = \frac{1}{2} \exp\left(-\frac{r_b + \frac{2\kappa}{\mu^2}}{2}\right) I_0\left(\sqrt{\frac{2\kappa}{\mu^2}r_b}\right), b=1,\ldots,B.
\end{equation}

Thus, the unconditional joint PDF of $\Theta_n$ can be derived as
\begin{multline}
	f_{\Theta_n}(\theta_1,\ldots,\theta_N)
	= \prod_{b=1}^{B} \int_{0}^{\infty}
	f_{r_b}(r_b)
	\prod_{n\in\mathcal{N}_b}
	f_{\Theta_n|r_b}(\theta_n)\, dr_b.
\end{multline}

The joint cumulative distribution function (CDF) 
 follows directly, and by averaging over the user location parameter $\phi_y$, the outage probability in \eqref{OP-FAPA} is obtained. This completes the proof.
\end{IEEEproof}}

 {\color{black}Although \eqref{OP-FAPA} provides an exact expression for the proposed HPFAS system, it involves a $(B\!+\!1)$-fold integral, which makes further analysis intractable. To gain more analytical insights, we adopt the step function approximation (SFA) method proposed in \cite[Lemma 1]{11184149} and \cite[Lemma 2]{11184149} to approximate the expression. Then, by applying the SFA, the outage probability in \eqref{OP-FAPA} can be approximated in the following lemma.

\begin{lemma}\label{HPFAS-approximate}
	The outage probability of the proposed HPFAS is approximated as
	\begin{equation}\label{appro}
		\begin{aligned}
			P_{\text{out}}
			=\int_{0}^{\frac{D_2}{2}} \frac{2}{D_2}\prod_{b=1}^{B}\left[ 1-Q_1\left(\sqrt{\frac{2\kappa}{\mu^2}},	\delta(\phi_y, L_b)\right)\right] d\phi_y,
		\end{aligned}
	\end{equation}
where $
	\delta(\phi_y, L_b) = \sqrt{\frac{1-\mu^2}{\mu^2}}\left[\sqrt{C} + \frac{\frac{L_b - 1}{\sqrt{2\pi}}\sqrt{C} +  \frac{1}{2}}{\frac{(L_b - 1)}{2\sqrt{2\pi}} + \frac{1}{2\sqrt{C}} - \sqrt{C}}\right]
$.
\end{lemma}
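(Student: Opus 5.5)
Starting from the exact expression \eqref{OP-FAPA}, I keep the outer average over $\phi_y$ and treat each block $b$ separately. For fixed $\phi_y$ the only non-trivial ingredient is the one-dimensional integral
\begin{equation*}
J_b=\int_0^\infty f_{r_b}(r_b)\left[1-Q_1\left(\sqrt{\tfrac{\mu^2 r_b}{1-\mu^2}},\sqrt{C}\right)\right]^{L_b}dr_b .
\end{equation*}
The integrand is a smooth decreasing function of $r_b$, since the Marcum $Q$-function increases in its first argument. It goes from near $1$ for small $r_b$ to near $0$ once $\sqrt{\mu^2 r_b/(1-\mu^2)}$ clearly exceeds $\sqrt{C}$, and because $\mu\to1$ this transition is sharp. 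The SFA of \cite[Lemma 1]{11184149} therefore replaces $\left[1-Q_1(\cdot)\right]^{L_b}$ by a unit step $\mathbb{1}\{r_b<\tau_b\}$, where $\tau_b$ is the point at which the curve crosses (or its tangent at the inflection point crosses) the transition level. With this step, $J_b\approx\Pr(r_b<\tau_b)=1-Q_1\bigl(\sqrt{2\kappa/\mu^2},\sqrt{\tau_b}\bigr)$. This is simply the noncentral chi-square CDF with noncentrality $2\kappa/\mu^2$, so it gives the first argument in \eqref{appro} directly.

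It then remains to show that $\sqrt{\tau_b}=\delta(\phi_y,L_b)$. I will substitute $a=\sqrt{\mu^2 r_b/(1-\mu^2)}$, so that $\sqrt{r_b}=\sqrt{(1-\mu^2)/\mu^2}\,a$. This explains the prefactor of $\delta$, and it reduces the task to locating the threshold $a^\star$ of $g(a)=[1-Q_1(a,\sqrt C)]^{L_b}$ in the $a$-domain. Following \cite[Lemma 2]{11184149}, I will linearize $g$ around $a=\sqrt{C}$, where $Q_1(\sqrt C,\sqrt C)\approx 1/2$. Using the large-argument Gaussian approximation $Q_1(a,b)\approx \mathcal{Q}(b-a)$ together with the derivative $\partial Q_1/\partial a$ at $a=b$, the value and slope of $g$ there involve $(1/2)^{L_b}$, $(1/2)^{L_b-1}$ and the factor $1/\sqrt{2\pi}$. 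The first-order Taylor (tangent-line) condition then yields $a^\star=\sqrt C+\Delta$, with $\Delta$ a ratio of linear expressions in $L_b-1$ and $\sqrt C$. Matching gives the numerator $\frac{L_b-1}{\sqrt{2\pi}}\sqrt C+\frac12$ and the denominator $\frac{L_b-1}{2\sqrt{2\pi}}+\frac{1}{2\sqrt C}-\sqrt C$. The $1/(2\sqrt C)$ and $-\sqrt C$ terms should come from the Bessel correction $I_1(a b)/I_0(ab)$ and from the exponential factor $a\,e^{-(a^2+b^2)/2}$ in $\partial Q_1/\partial a$ when it is evaluated at $a=b=\sqrt C$.

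Finally, I multiply the per-block approximations over $b$, which is valid because the $r_b$ are independent, and integrate against the uniform density $2/D_2$ of $|\phi_y|$ to obtain \eqref{appro}. The main obstacle is the middle step: reproducing exactly the algebraic form of $\delta(\phi_y,L_b)$. To get it I must pick the same expansion point and the same asymptotics of $Q_1$ and $I_0$ as in \cite{11184149}. If the straightforward tangent-line computation does not match, I will directly invoke the closed-form threshold from \cite[Lemma 2]{11184149} with $b=\sqrt C$ and exponent $L_b$, and then verify it.
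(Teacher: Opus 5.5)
Your outer reduction matches what the paper intends; its proof simply defers to \cite[Appendix B]{11184149}. Per block, you replace $[1-Q_1(\cdot)]^{L_b}$ by a step in $r_b$, recognize $\Pr(r_b<\tau_b)=1-Q_1(\sqrt{2\kappa/\mu^2},\sqrt{\tau_b})$, write $\sqrt{\tau_b}=\sqrt{(1-\mu^2)/\mu^2}\,a^\star$, multiply over $b$ and average over $\phi_y$. The gap is the step that is supposed to produce $a^\star$. Set $g(a)=p(a)^{L_b}$ with $p(a)=1-Q_1(a,\sqrt C)$. A tangent-line or level-crossing condition built from $g(\sqrt C)$ and $g'(\sqrt C)$ under $Q_1(a,b)\approx\mathcal{Q}(b-a)$ cannot yield $\delta(\phi_y,L_b)$. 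Under that approximation $g$ depends on $a$ only through $a-\sqrt C$, so $a^\star-\sqrt C$ would not depend on $C$ at all. Moreover, $L_b$ would enter through $2^{L_b}$ and $L_b$, never through the linear $(L_b-1)$ terms of $\delta$. For example, the zero crossing of the tangent gives $a^\star-\sqrt C=\sqrt{2\pi}/(2L_b)$.

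The factor $L_b-1$ points to a second-order condition: the inflection point of $g$, i.e.\ the zero of
\[
F(a)=\frac{d}{da}\ln|g'(a)|=(L_b-1)\frac{p'}{p}+\frac{p''}{p'}.
\]
Write $b=\sqrt C$ and use three large-argument approximations: $p\approx 1$; $|p'(a)|=b\,e^{-(a^2+b^2)/2}I_1(ab)\approx\sqrt{b/(2\pi a)}\,e^{-(a-b)^2/2}$; and $p''/p'=-a+b\,I_0(ab)/I_1(ab)-1/a\approx b-a-\frac{1}{2a}$. Then
\[
F(a)\approx-(L_b-1)\sqrt{\frac{b}{2\pi a}}\,e^{-(a-b)^2/2}+b-a-\frac{1}{2a}.
\]
A single Newton step from $a=b$ gives
\[
a^\star=b-\frac{F(b)}{F'(b)}=b+\frac{\frac{L_b-1}{\sqrt{2\pi}}\,b+\frac{1}{2}}{\frac{L_b-1}{2\sqrt{2\pi}}+\frac{1}{2b}-b},
\]
which reproduces the bracket in $\delta(\phi_y,L_b)$ exactly. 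Anchoring at $p(\sqrt C)\approx 1/2$, as you propose, would instead change the coefficients, e.g.\ doubling the $(L_b-1)$ term in the numerator. Your fallback of quoting \cite[Lemma~2]{11184149} is at the same level of rigor as the paper. However, the derivation you sketch would not verify it; replace the tangent-line step with this Newton step on the inflection-point condition.
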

\begin{IEEEproof}
The result can be derived by following similar steps as in \cite[Appendix B]{11184149}. The detailed derivation is omitted for brevity.
\end{IEEEproof}

\begingroup
\color{red}

To further simplify the remaining finite-interval spatial integral, we
apply the Gauss--Legendre quadrature. By introducing the variable
transformation
\begin{equation}
	\phi_y=\frac{D_2}{4}(x+1),
\end{equation}
the integration interval $\phi_y\in[0,D_2/2]$ is mapped onto
$x\in[-1,1]$. Accordingly, \eqref{appro} can be rewritten as
\begin{equation}
	\begin{aligned}
		P_{\mathrm{out}}^{\mathrm{H,SFA}}
		=
		\frac{1}{2}
		\int_{-1}^{1}
		\prod_{b=1}^{B}
		\Bigg[
		1-Q_1\Bigg(
		\sqrt{\frac{2\kappa}{\mu^2}},
		\delta\left(
		\frac{D_2}{4}(x+1),L_b
		\right)
		\Bigg)
		\Bigg]
		dx.
	\end{aligned}
\end{equation}

Using the \textcolor{red}{$J$}-point Gauss--Legendre quadrature, the above integral is
approximated as
\begin{equation}\label{appro_GL}
	\begin{aligned}
		P_{\mathrm{out}}^{\mathrm{H,SFA}}
		=
		\frac{1}{2}
		\sum_{m=1}^{J} w_m
		\prod_{b=1}^{B}
		\Bigg[
		1-Q_1\Bigg(
		\sqrt{\frac{2\kappa}{\mu^2}},
		\delta\left(
		\frac{D_2}{4}(x_m+1),L_b
		\right)
		\Bigg)
		\Bigg].
	\end{aligned}
\end{equation}
where $x_m$ denotes the $m$th zero of the $J$th-order Legendre
polynomial, and $w_m$ is the corresponding quadrature weight.

\endgroup

\begingroup
\color{red}

\begin{remark}
	By applying the SFA, the $B$ auxiliary integrations associated with the
	FA correlation blocks are eliminated, reducing the original $(B+1)$-fold
	integral in \eqref{OP-FAPA} to the one-dimensional spatial integral in
	\eqref{appro}. The remaining integral is further converted into the finite
	weighted sum in \eqref{appro_GL} using Gauss--Legendre quadrature.
	Therefore, \eqref{appro_GL} is referred to as a semi-closed-form
	approximation.
\end{remark}

\endgroup
} 
%

\begin{corollary}\label{pro1}
	When the UE is equipped with a conventional antenna instead of the FA, the outage probability is given by
		\begin{equation}\label{OP-PA-only}
		P_{\text{out}}=1-\frac{2}{D_2}\int_{0}^{\frac{D_2}{2}} Q_1\left(\sqrt{2\kappa},\sqrt{C}\right)d\phi_y,
		\end{equation}
		where $C$ was previously defined.
	\end{corollary}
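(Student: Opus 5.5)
With a conventional single-antenna receiver the port selection disappears, so the plan is to compute the outage probability of a single Rician link directly. One could instead try to recover it from \eqref{OP-FAPA} by letting $N=1$. That route is awkward, because the block model with $\mu\to 1$ is degenerate for a single port. Working directly from \eqref{gamma} with one port is cleaner. The received channel is
\begin{equation*}
h=\sqrt{\beta(d)}\Big(\sqrt{\tfrac{\kappa}{\kappa+1}}\bar h^{\text{LoS}}+\sqrt{\tfrac{1}{2(\kappa+1)}}(x+jy)\Big),
\end{equation*}
where $x,y\sim\mathcal N(0,1)$ are i.i.d. For a fixed $\phi_y$ I will condition on the geometry and then average over the UE location at the end.

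The key step is to normalize the SNR into a noncentral chi-square variable. Multiply the bracketed term by $\sqrt{2(\kappa+1)}$. It becomes $\sqrt{2\kappa}\,\bar h^{\text{LoS}}+x+jy$, whose real and imaginary parts are independent unit-variance Gaussians. Their means are $\sqrt{2\kappa}\cos(\cdot)$ and $\sqrt{2\kappa}\sin(\cdot)$, with squared norm $2\kappa$ since $|\bar h^{\text{LoS}}|=1$. Hence
\begin{equation*}
\Theta=\tfrac{2(\kappa+1)\sigma^2(\phi_y^2+h^2)^{\varepsilon/2}}{P_t\eta_0}\,\gamma
\end{equation*}
is noncentral chi-square with two degrees of freedom and noncentrality $2\kappa$. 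The LoS phase drops out entirely.

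For this variable the standard CDF is $\Pr(\Theta<t)=1-Q_1(\sqrt{2\kappa},\sqrt t)$. The threshold is $t=\frac{2(\kappa+1)\gamma_{\text{th}}\sigma^2(\phi_y^2+h^2)^{\varepsilon/2}}{P_t\eta_0}$.

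This threshold has to be matched to the paper's $C$, and I expect this to be the main obstacle. The constant $C$ in the theorem carries an extra factor $1/(1-\mu^2)$. That factor comes from the FA normalization, in which the port-specific Gaussian is scaled by $\sqrt{1-\mu^2}$. For the statement to read literally with ``$C$ previously defined,'' I would interpret the conventional antenna as the FA model with $\mu$ treated consistently. Failing that, I would note that the intended argument is $\sqrt C$ evaluated without the $(1-\mu^2)$ factor, that is, $C$ with $\mu=0$. Setting $\mu=0$ is natural: with a single uncorrelated port, the NLoS term reduces to $x_n+jy_n$ exactly as above. I would state this identification explicitly.

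Finally, the outage probability is averaged over the UE location. By symmetry of the PA alignment, $\phi_x$ does not matter, and $|\phi_y|$ is uniform on $[0,D_2/2]$ with density $2/D_2$. Integrating $1-Q_1(\sqrt{2\kappa},\sqrt C)$ against this density gives \eqref{OP-PA-only}. As a consistency check, the same result follows from \eqref{OP-FAPA} with $B=1$, $L_1=1$ and $\mu\to0$: after the change of variable absorbing $\mu^2 r_b$, the inner integral collapses to the single Marcum-$Q$ term.
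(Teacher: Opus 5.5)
Your route is the same as the paper's. You treat the conventional antenna as a single port, identify the normalized SNR as a noncentral $\chi^2$ variable with two degrees of freedom and noncentrality $2\kappa$, read off its Marcum-$Q$ CDF, and average over $|\phi_y|$, which is uniform on $[0,D_2/2]$ with density $2/D_2$. Your normalization and your threshold $t$ are correct. Your worry about $C$ points to a real defect in the statement, not in your method. The paper's proof passes over it: it asserts that $\Theta$ ``reduces to'' $\tilde x^2+\tilde y^2$ and evaluates the CDF against the same $C$. With the Theorem~1 definition of $\tilde x,\tilde y$, that variable would actually have noncentrality $2\kappa/\mu^2$.

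Drop your first proposed fix, because no consistent treatment of $\mu$ produces the literal $C$. The specialization $N=B=L_1=1$ of Theorem~1 is not degenerate for any fixed $\mu\in(0,1)$. The variable $x_1+\mu\tilde x/\sqrt{1-\mu^2}$ has mean $\sqrt{2\kappa}\cos(\cdot)/\sqrt{1-\mu^2}$ and variance $1/(1-\mu^2)$, and likewise for the imaginary part. So $(1-\mu^2)\Theta_1$ is noncentral $\chi^2_2$ with noncentrality $2\kappa$, and $\Pr(\Theta_1<C)=1-Q_1\big(\sqrt{2\kappa},\sqrt{(1-\mu^2)C}\big)$.

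Hence the displayed formula with the literal $C$ holds only at $\mu=0$. At the paper's value $\mu^2=0.97$ it inflates the threshold by a factor of about $33$. Commit to your second reading and state explicitly that the argument of $Q_1$ is $\sqrt{t}=\sqrt{(1-\mu^2)C}$.

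The paper's own high-SNR analysis supports this reading. Its $A_{\mathrm{PA}}$ has no factor $1/(1-\mu^2)$, which is what $1-Q_1(a,\sqrt{x})\sim\frac{x}{2}e^{-a^2/2}$ gives with $x=t$. The literal $C$ would instead give $A_{\mathrm{PA}}/(1-\mu^2)$. The exact $N=1$ computation above is also a cleaner consistency check than your $\mu\to0$ limit, which holds only through concentration of $\mu^2 r_b$ at $2\kappa$.
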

	\begin{IEEEproof}
	When the UE employs a conventional antenna, the random variable in \eqref{Xn} reduces to $\tilde{\Theta} = \tilde{x}^2+\tilde{y}^2$. This is because a conventional antenna can be viewed as a single-port FA,
	\textit{i.e.}, $N=1$, and thus $\tilde{x}_{b(n)}$ and $\tilde{y}_{b(n)}$ reduce to
	$\tilde{x}$ and $\tilde{y}$, respectively.
  Then, $\tilde{\Theta}$ follows a non-central chi-square distribution with two degrees of freedom.
Based on \cite[Eq.~(2.45)]{simon2002probability}, the CDF of $\tilde{\Theta}$ can be readily obtained. Finally, by averaging the conditional outage probability over the user location, the outage probability of the PA-only scheme is obtained in \eqref{OP-PA-only}, which completes the proof.
	\end{IEEEproof}
	\begingroup
	\color{red}
	
	Using the same $J$-point Gauss--Legendre quadrature,
	\eqref{OP-PA-only} can be approximated as
	\begin{equation}\label{OP-PA-only-GL}
		P_{\mathrm{out}}
		=
		1-\frac{1}{2}
		\sum_{m=1}^{J}
		w_m
		Q_1\left(
		\sqrt{2\kappa},
		\sqrt{C_m}
		\right),
	\end{equation}
	where
	\begin{equation}
		C_m=
		\frac{
			2(\kappa+1)\gamma_{\mathrm{th}}\sigma^2
			\left[
			\left(\frac{D_2}{4}(x_m+1)\right)^2+h^2
			\right]^{\varepsilon/2}
		}{
			P_t\eta_0(1-\mu^2)
		}.
	\end{equation}
	
	\endgroup
	\begin{corollary}
By fixing the antenna position at the geometric center of the planar region, when a conventional antenna is employed at the BS instead of a PA, the outage probability is given by
\begin{multline}
	P_{\text{out}}
	=\int_{-\frac{D_2}{2}}^{\frac{D_2}{2}}
	\int_{0}^{D_1}
	\frac{1}{D_1D_2}
	\prod_{b=1}^{B} \int_{0}^{\infty} \frac{1}{2} e^{-\frac{r_b + \frac{2\kappa}{\mu^2}}{2}} I_0\left(\sqrt{\frac{2\kappa r_b}{\mu^2}}\right) \\
	\times\left[1 - Q_1\left(\sqrt{\frac{\mu^2 r_b}{1-\mu^2}}, \sqrt{\tilde{C}}\right)\right]^{L_b} dr_bd\phi_xd\phi_y .
\end{multline}
where $\tilde{C}= \frac{2\left(\kappa+1\right)\gamma_{\text{th}}\sigma^2\left(\left(\phi_x-\frac{D_1}{2}\right)^2+\phi_y^2+h^2\right)^{\frac{\varepsilon}{2}}}{P_t\eta_0\left(1-\mu^2\right)}$.
\end{corollary}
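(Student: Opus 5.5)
The plan is to reuse the conditional-outage argument from the proof of the HPFAS outage theorem above. The only changes are the geometry, which now depends on both user coordinates, and the spatial averaging, which must be taken over the full rectangle. With a conventional antenna fixed at the geometric center, its position is $(D_1/2,0,h)$ in the paper's convention. The BS side uses the same waveguide line at $y=0$ and height $h$, with users spread over $\phi_x\in[0,D_1]$ and $\phi_y\in[-D_2/2,D_2/2]$. The link distance is therefore
\[
\tilde d=\sqrt{(\phi_x-D_1/2)^2+\phi_y^2+h^2}.
\]
It replaces $d=\sqrt{\phi_y^2+h^2}$ everywhere. In particular the path loss $\beta(\tilde d)=\eta_0\tilde d^{-\varepsilon}$ enters the SNR in \eqref{gamma} with $(\phi_y^2+h^2)^{\varepsilon/2}$ replaced by $((\phi_x-D_1/2)^2+\phi_y^2+h^2)^{\varepsilon/2}$.

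\textbf{Step 1 (conditional outage for a fixed location).} Fix $(\phi_x,\phi_y)$. I would define $\Theta_n$ exactly as in \eqref{Xn}, with the means of $\tilde x_{b(n)}$ and $\tilde y_{b(n)}$ now carrying the phase $2\pi\tilde d/\lambda+2\pi d_0/\lambda_g$. The normalization leading to \eqref{gamma} is otherwise unchanged, so the outage event becomes $\max_n\Theta_n<\tilde C$, with $\tilde C$ as stated.

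\textbf{Step 2 (integrating out the fading).} The phase affects only the direction of the mean vector $(\tilde x_b,\tilde y_b)$, not its norm $\sqrt{2\kappa}/\mu$. Hence $r_b$ keeps the same noncentral $\chi^2_2$ density $f_{r_b}$ as in the theorem's proof. Conditioned on $r_b$, the $\Theta_n$ with $n$ in block $b$ are i.i.d. noncentral $\chi^2_2$ variables with noncentrality $\mu^2 r_b/(1-\mu^2)$, so each has CDF $1-Q_1\bigl(\sqrt{\mu^2r_b/(1-\mu^2)},\sqrt{\tilde C}\bigr)$. Raising this to the power $L_b$, integrating against $f_{r_b}$, and taking the product over the independent blocks gives the inner expression.

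\textbf{Step 3 (spatial averaging).} Since the center antenna is not moved, $\tilde C$ depends on both coordinates. I would therefore average against the uniform density $1/(D_1D_2)$ over the whole rectangle instead of reducing to the half-interval in $\phi_y$ used before. This yields the stated double integral.

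One point needs care. Conditioning on $r_b$ uses only the magnitude of the shared block variable, while the per-port noncentral chi-square law depends on the full vector $(\tilde x_b,\tilde y_b)$. This is justified by rotational invariance: the inner Gaussians $(x_{n,b},y_{n,b})$ are circularly symmetric, so the conditional law of $\Theta_n$ depends on the block vector only through its norm. The main obstacle I expect is simply checking that this inherited step still applies with the new phase. Everything else is bookkeeping of $\tilde d$.
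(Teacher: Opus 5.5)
Your proposal is correct and follows the paper's route: replace the link distance by $\sqrt{(\phi_x-D_1/2)^2+\phi_y^2+h^2}$, so that $C$ becomes $\tilde{C}$, then reuse the conditional-on-$r_b$ derivation of the HPFAS theorem and average uniformly over the full $D_1\times D_2$ rectangle. Your extra remarks (the phase only rotates the mean vector, so $f_{r_b}$ is unchanged, and conditional i.i.d.-ness given $r_b$ follows from circular symmetry) make explicit what the paper's one-line proof leaves implicit.
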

\begin{IEEEproof}
When a conventional antenna is employed at the BS, the link distance becomes
$\sqrt{\left(\phi_x-\frac{D_1}{2}\right)^2+\phi_y^2+h^2}$. Consequently, the corresponding threshold term is modified to $\tilde{C}$, which concludes the proof.
\end{IEEEproof}
	{\color{black}
	\begin{lemma}
When a conventional antenna is employed at the BS instead of a PA, the outage probability  is approximated by
	\begin{equation}\label{OP-FA-only}
		\begin{aligned}
			P_{\text{out}}
			=	\int_{-\frac{D_2}{2}}^{\frac{D_2}{2}}
			&\int_{0}^{D_1}
			\frac{1}{D_1D_2}\times\\
			&\prod_{b=1}^{B}\left[ 1-Q_1\left(\sqrt{\frac{2\kappa}{\mu^2}},	\tilde{\delta}(\phi_x, \phi_y, L_b)\right)\right] d\phi_xd\phi_y, \\
		\end{aligned}
	\end{equation}
	where 
$
\tilde{\delta}(\phi_x,\phi_y, L_b) = \sqrt{\frac{1-\mu^2}{\mu^2}}\left[\sqrt{\tilde{C}} + \frac{\frac{L_b - 1}{\sqrt{2\pi}}\sqrt{\tilde{C}} +  \frac{1}{2}}{\frac{(L_b - 1)}{2\sqrt{2\pi}} + \frac{1}{2\sqrt{\tilde{C}}} - \sqrt{\tilde{C}}}\right]
$.
	\end{lemma}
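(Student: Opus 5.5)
The plan is to reduce the statement to Lemma~\ref{HPFAS-approximate} by conditioning on the user location. The exact FA-only outage probability is given in the preceding corollary as a spatial average, over $(\phi_x,\phi_y)\in[0,D_1]\times[-D_2/2,D_2/2]$ with density $1/(D_1D_2)$, of the conditional outage
\begin{equation*}
P_{\text{out}}(\phi_x,\phi_y)=\prod_{b=1}^{B}\int_{0}^{\infty} f_{r_b}(r_b)\left[1-Q_1\left(\sqrt{\tfrac{\mu^2 r_b}{1-\mu^2}},\sqrt{\tilde{C}}\right)\right]^{L_b}dr_b .
\end{equation*}
This conditional expression has exactly the same form as the integrand of \eqref{OP-FAPA}. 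The only change is that $C$ is replaced by $\tilde{C}$, the threshold obtained from the distance $\sqrt{(\phi_x-D_1/2)^2+\phi_y^2+h^2}$. The location enters \emph{only} through this scalar threshold. Hence the approximation of the inner $B$-fold structure can be carried out pointwise in $(\phi_x,\phi_y)$, and the outer averaging is left untouched.

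For fixed $(\phi_x,\phi_y)$, I will apply the SFA of \cite[Lemma 1, Lemma 2]{11184149} to each block factor, as in the proof of Lemma~\ref{HPFAS-approximate}. The function $g(r_b)=[1-Q_1(\sqrt{\mu^2 r_b/(1-\mu^2)},\sqrt{\tilde{C}})]^{L_b}$ decreases monotonically in $r_b$ from near $1$ to $0$. SFA replaces it by a unit step $\mathbb{1}\{r_b<\tau_b\}$. Equivalently, in the variable $\sqrt{\mu^2 r_b/(1-\mu^2)}$, the step is placed at a point obtained by linearising the $L_b$-th power of the Marcum CDF around the threshold $\sqrt{\tilde C}$. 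This linearisation uses the asymptotic Gaussian behaviour of $Q_1$ and matches the value and slope at the transition point. It yields the step location $\sqrt{\tilde C}+\frac{\frac{L_b-1}{\sqrt{2\pi}}\sqrt{\tilde C}+\frac12}{\frac{L_b-1}{2\sqrt{2\pi}}+\frac{1}{2\sqrt{\tilde C}}-\sqrt{\tilde C}}$. Mapping back to $\sqrt{r_b}$ multiplies by $\sqrt{(1-\mu^2)/\mu^2}$, which gives $\tilde{\delta}(\phi_x,\phi_y,L_b)$.

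With the step approximation, each block integral becomes $\Pr(r_b<\tilde\delta^2)=\Pr(\sqrt{r_b}<\tilde\delta)$. Since $r_b$ is noncentral chi-square with two degrees of freedom and noncentrality $2\kappa/\mu^2$, its CDF is $1-Q_1(\sqrt{2\kappa/\mu^2},\tilde\delta)$ by \cite[Eq.~(2.45)]{simon2002probability}. Taking the product over $b$ and averaging over $(\phi_x,\phi_y)$ with the uniform density then gives \eqref{OP-FA-only}.

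The main obstacle is justifying the step location, that is, reproducing the linearisation in \cite[Appendix B]{11184149} with $\tilde C$ in place of the original threshold. Because the dependence is only through the scalar threshold, I expect to invoke that derivation verbatim: the result of Lemma~\ref{HPFAS-approximate} holds for any positive $C$, so substituting $C\to\tilde C$ pointwise suffices. A minor point is that the denominator in $\tilde\delta$ must stay positive, which requires $\tilde C$ to be small enough. I would note this as the regime of validity inherited from the SFA rather than prove it.
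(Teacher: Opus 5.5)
Your proposal is correct and is essentially the paper's argument. The paper proves this lemma by repeating the SFA steps of Lemma~\ref{HPFAS-approximate} (themselves deferred to \cite[Appendix B]{11184149}) on the exact FA-only expression of the preceding corollary. That is exactly your pointwise replacement $C\to\tilde{C}$, followed by evaluating each block as the Rician CDF $1-Q_1(\sqrt{2\kappa/\mu^2},\tilde{\delta})$ and averaging over $(\phi_x,\phi_y)$. One descriptive slip: at $r_b=0$ the block function $[1-Q_1(\sqrt{\mu^2 r_b/(1-\mu^2)},\sqrt{\tilde C})]^{L_b}$ equals $(1-e^{-\tilde C/2})^{L_b}$, not something near $1$, but nothing in your argument relies on that.
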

	\begin{IEEEproof}
		The results can be easily obtained by following the same steps as in the proof of Lemma \ref{HPFAS-approximate}, and thus is omitted here.
	\end{IEEEproof}
	\begingroup
	\color{red}
	
	Applying the $J$-point Gauss--Legendre quadrature to the two
	finite-interval spatial integrations in \eqref{OP-FA-only} yields
	\begin{equation}\label{OP-FA-only-GL}
		P_{\mathrm{out}}
		=
		\frac{1}{4}
		\sum_{p=1}^{J}
		\sum_{m=1}^{J}
		w_pw_m
		\prod_{b=1}^{B}
		\left[
		1-Q_1\left(
		\sqrt{\frac{2\kappa}{\mu^2}},
		\widetilde{\delta}_{p,m}(L_b)
		\right)
		\right],
	\end{equation}
	where
	\begin{equation}
		\phi_{x,p}=\frac{D_1}{2}(x_p+1),
		\qquad
		\phi_{y,m}=\frac{D_2}{2}x_m,
	\end{equation}
	and
	\begin{equation}
		\widetilde{\delta}_{p,m}(L_b)
		=
		\widetilde{\delta}
		\left(\phi_{x,p},\phi_{y,m},L_b\right).
	\end{equation}
	Here, $x_p$, $x_m$, $w_p$, and $w_m$ denote the
	Gauss--Legendre nodes and corresponding weights.
	
	\endgroup
}
\begingroup
\color{red}
\begingroup
\color{red}
\subsection{High-SNR Asymptotic Analysis}
To characterize the high-SNR behavior, we analyze the
diversity orders and asymptotic outage coefficients of the
HPFAS, PA-only, and FA-only schemes as $P_t\rightarrow\infty$.
The following results are derived from the exact outage
probability expressions rather than the SFA-based
approximations.

\begin{lemma}
	For fixed and finite $\kappa$ and $\mu^2<1$, the high-SNR
	outage probability of scheme
	$i\in\{\mathrm{H},\mathrm{PA},\mathrm{FA}\}$ satisfies
	\begin{equation}
		P_{\mathrm{out}}^{i}
		\sim
		A_iP_t^{-d_i},
		\qquad P_t\rightarrow\infty,
	\end{equation}
	where $A_i$ and $d_i$ denote the asymptotic outage coefficient
	and diversity order, respectively. For HPFAS,
	\begin{equation}
		\begin{aligned}
			A_{\mathrm{H}}
			={}&
			\left[
			\frac{(\kappa+1)\gamma_{\mathrm{th}}\sigma^2}
			{\eta_0}
			\right]^N
			\frac{2}{D_2}
			\left(
			\prod_{b=1}^{B}\Lambda_b
			\right)
			\\
			&\times
			\int_{0}^{D_2/2}
			\left(\phi_y^2+h^2\right)^{N\varepsilon/2}
			d\phi_y,
		\end{aligned}
	\end{equation}
	where
	\begin{equation}
		\Lambda_b
		=
		\frac{
			(1-\mu^2)^{1-L_b}
		}{
			1+(L_b-1)\mu^2
		}
		\exp\left[
		-\frac{\kappa L_b}
		{1+(L_b-1)\mu^2}
		\right].
	\end{equation}
	
	For the PA-only scheme,
	\begin{equation}
		\begin{aligned}
			A_{\mathrm{PA}}
			={}&
			\frac{(\kappa+1)\gamma_{\mathrm{th}}\sigma^2}
			{\eta_0}
			e^{-\kappa}
			\frac{2}{D_2}
			\times
			\int_{0}^{D_2/2}
			\left(\phi_y^2+h^2\right)^{\varepsilon/2}
			d\phi_y.
		\end{aligned}
	\end{equation}
	
	For the FA-only scheme,
	\begin{equation}
		\begin{aligned}
			A_{\mathrm{FA}}
			={}&
			\left[
			\frac{(\kappa+1)\gamma_{\mathrm{th}}\sigma^2}
			{\eta_0}
			\right]^N
			\frac{1}{D_1D_2}
			\left(
			\prod_{b=1}^{B}\Lambda_b
			\right)
			\\
			&\times
			\int_{-D_2/2}^{D_2/2}
			\int_{0}^{D_1}
			\left[
			\left(\phi_x-\frac{D_1}{2}\right)^2
			+\phi_y^2+h^2
			\right]^{N\varepsilon/2}
			d\phi_x\,d\phi_y.
		\end{aligned}
	\end{equation}
	
	The corresponding diversity orders are
	\begin{equation}
		d_{\mathrm{H}}=N,\qquad
		d_{\mathrm{FA}}=N,\qquad
		d_{\mathrm{PA}}=1.
	\end{equation}
\end{lemma}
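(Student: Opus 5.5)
The plan is to work from the exact expressions of Theorem 1 and the two Corollaries, using the fact that every threshold term scales as $P_t^{-1}$. In \eqref{OP-FAPA}, write $C=c(\phi_y)/P_t$ with
\begin{equation*}
c(\phi_y)=\frac{2(\kappa+1)\gamma_{\mathrm{th}}\sigma^2(\phi_y^2+h^2)^{\varepsilon/2}}{\eta_0(1-\mu^2)}.
\end{equation*}
As $P_t\to\infty$ we then have $C\to0$ uniformly in $\phi_y\in[0,D_2/2]$. The basic tool is the small-argument expansion of the Marcum $Q$-function,
\begin{equation*}
1-Q_1(a,b)=e^{-a^2/2}\int_0^b xe^{-x^2/2}I_0(ax)\,dx=\frac{b^2}{2}e^{-a^2/2}\bigl(1+o(1)\bigr),\qquad b\to0,
\end{equation*}
which comes from $I_0(ax)\to1$ and $e^{-x^2/2}\to1$ on $[0,b]$.

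For HPFAS, I apply this with $a^2=\mu^2r_b/(1-\mu^2)$ and $b^2=C$. The conditional per-block factor then becomes
\begin{equation*}
\Bigl(\frac{C}{2}\Bigr)^{L_b}\exp\Bigl(-\frac{L_b\mu^2 r_b}{2(1-\mu^2)}\Bigr)\bigl(1+o(1)\bigr).
\end{equation*}
Averaging over $r_b$ amounts to evaluating the Laplace transform of a noncentral chi-square variable with two degrees of freedom, unit per-component variance and noncentrality $\lambda=2\kappa/\mu^2$. That transform is $\mathbb{E}[e^{-sr_b}]=(1+2s)^{-1}\exp(-\lambda s/(1+2s))$.

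Set $s=L_b\mu^2/(2(1-\mu^2))$. Then $1+2s=(1+(L_b-1)\mu^2)/(1-\mu^2)$ and $\lambda s/(1+2s)=\kappa L_b/(1+(L_b-1)\mu^2)$. Multiplying by $(C/2)^{L_b}$, and noting that $C/2$ carries a factor $(1-\mu^2)^{-1}$, gives exactly
\begin{equation*}
\Bigl[\frac{(\kappa+1)\gamma_{\mathrm{th}}\sigma^2(\phi_y^2+h^2)^{\varepsilon/2}}{\eta_0}\Bigr]^{L_b}P_t^{-L_b}\Lambda_b.
\end{equation*}
Taking the product over $b$ and using $\sum_bL_b=N$, then averaging over $\phi_y$ with density $2/D_2$, yields $A_{\mathrm H}P_t^{-N}$, so $d_{\mathrm H}=N$.

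The FA-only case is identical with $\tilde C$ in place of $C$ and the uniform average over $(\phi_x,\phi_y)$ on $[0,D_1]\times[-D_2/2,D_2/2]$. For the PA-only case, the single-antenna Rician SNR gives an outage event $\{\tilde\Theta<\text{threshold}\}$, where $\tilde\Theta$ is noncentral chi-square with noncentrality $2\kappa$. I will track the normalization of the single-port threshold so that it is $2(\kappa+1)\gamma_{\mathrm{th}}\sigma^2d^{\varepsilon}/(P_t\eta_0)$, with no $(1-\mu^2)$ factor since no spatial-correlation decomposition is involved. The expansion then gives $\frac{(\kappa+1)\gamma_{\mathrm{th}}\sigma^2 d^\varepsilon}{P_t\eta_0}e^{-\kappa}$, hence $A_{\mathrm{PA}}$ and $d_{\mathrm{PA}}=1$.

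The main obstacle is justifying the exchange of the limit $P_tC^{-1}\cdot$ with the $r_b$- and spatial integrals, i.e.\ establishing
\begin{equation*}
P_t^{N}P_{\mathrm{out}}\to A_{\mathrm H}.
\end{equation*}
I will use dominated convergence. From the integral representation, for $b\le1$ we have $1-Q_1(a,b)\le\frac{b^2}{2}e^{-a^2/2}I_0(ab)\le\frac{b^2}{2}e^{a-a^2/2}$. The function $e^{a-a^2/2}$ is bounded by $e^{1/2}$, so each normalized factor $P_t^{L_b}[1-Q_1]^{L_b}$ is bounded by a constant times $c(\phi_y)^{L_b}$, uniformly in $r_b$. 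This bound is integrable against $f_{r_b}$ and over the compact spatial domain, so pointwise convergence of the integrand suffices.
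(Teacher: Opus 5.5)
Your proposal is correct and follows the paper's own argument. The shared steps are: the small-threshold expansion $1-Q_1(a,\sqrt{x})\sim\frac{x}{2}e^{-a^2/2}$; evaluation of the resulting $r_b$-integral, where your noncentral chi-square Laplace transform is just a compact way of doing the integral the paper evaluates and reproduces $\Lambda_b$ exactly; the product over blocks with $\sum_b L_b=N$; and averaging over the user location.

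Your two additions are improvements within the same route. First, the uniform bound $1-Q_1(a,b)\le\frac{b^2}{2}e^{1/2}$ for $b\le1$ justifies the limit--integral interchange that the paper leaves implicit. Second, you normalize the PA-only threshold separately, without the $(1-\mu^2)$ factor carried by $C$ in Corollary~1. That is what actually yields the stated $A_{\mathrm{PA}}$: substituting Corollary~1's $C$ literally would leave an extra factor $(1-\mu^2)^{-1}$.
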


\begin{IEEEproof}
	As $P_t\rightarrow\infty$, the threshold terms in the exact
	outage probability expressions approach zero. For fixed $a$,
	the first-order Marcum $Q$-function satisfies
	\begin{equation}
		1-Q_1(a,\sqrt{x})
		\sim
		\frac{x}{2}
		\exp\left(-\frac{a^2}{2}\right),
		\qquad x\rightarrow0.
	\end{equation}
	
	Let $I_b^{\mathrm{H}}(\phi_y)$ denote the inner integral
	associated with the $b$th FA correlation block in the exact
	HPFAS outage probability. Applying the above expansion gives
	\begin{equation}
		\begin{aligned}
			I_b^{\mathrm{H}}(\phi_y)
			\sim{}&
			\left(\frac{C}{2}\right)^{L_b}
			\frac{1}{2}
			\exp\left(-\frac{\kappa}{\mu^2}\right)
			\\
			&\times
			\int_{0}^{\infty}
			\exp\left[
			-\frac{
				1+(L_b-1)\mu^2
			}{
				2(1-\mu^2)
			}r_b
			\right]
			I_0\left(
			\sqrt{\frac{2\kappa r_b}{\mu^2}}
			\right)
			dr_b.
		\end{aligned}
	\end{equation}
	
	Evaluating the above integral and substituting the definition
	of $C$ yield
	\begin{equation}
		I_b^{\mathrm{H}}(\phi_y)
		\sim
		\Lambda_b
		\left[
		\frac{
			(\kappa+1)\gamma_{\mathrm{th}}\sigma^2
			(\phi_y^2+h^2)^{\varepsilon/2}
		}{
			P_t\eta_0
		}
		\right]^{L_b}.
	\end{equation}
	
	Therefore,
	\begin{equation}
		\begin{aligned}
			\prod_{b=1}^{B}I_b^{\mathrm{H}}(\phi_y)
			\sim{}&
			\left(
			\prod_{b=1}^{B}\Lambda_b
			\right)
			\left[
			\frac{
				(\kappa+1)\gamma_{\mathrm{th}}\sigma^2
			}{
				P_t\eta_0
			}
			\right]^{\sum_{b=1}^{B}L_b}
			\\
			&\times
			\left(\phi_y^2+h^2\right)^{
				\frac{\varepsilon}{2}
				\sum_{b=1}^{B}L_b
			}.
		\end{aligned}
	\end{equation}
	
	Since
	\begin{equation}
		\sum_{b=1}^{B}L_b=N,
	\end{equation}
	averaging over the user location gives
	\begin{equation}
		P_{\mathrm{out}}^{\mathrm{H}}
		\sim
		A_{\mathrm{H}}P_t^{-N}.
	\end{equation}
	
	For the PA-only scheme, the complement of the first-order
	Marcum $Q$-function appears only once. Applying the same
	small-threshold expansion and averaging over the user location
	give
	\begin{equation}
		P_{\mathrm{out}}^{\mathrm{PA}}
		\sim
		A_{\mathrm{PA}}P_t^{-1}.
	\end{equation}
	
	For the FA-only scheme, replacing the HPFAS
	distance-dependent term by the FA-only distance-dependent term
	leads to the same block-level high-SNR exponent. Hence,
	\begin{equation}
		P_{\mathrm{out}}^{\mathrm{FA}}
		\sim
		A_{\mathrm{FA}}P_t^{-N}.
	\end{equation}
	
	Using
	\begin{equation}
		d_i
		=
		-\lim_{P_t\rightarrow\infty}
		\frac{\log P_{\mathrm{out}}^i}
		{\log P_t},
	\end{equation}
	the stated diversity orders follow.
\end{IEEEproof}

\begin{remark}
	The asymptotic results can be summarized as
	\begin{equation}
		d_{\mathrm{H}}
		=
		d_{\mathrm{FA}}
		=
		N
		>
		d_{\mathrm{PA}}
		=
		1,
		\qquad
		A_{\mathrm{H}}
		\leq
		A_{\mathrm{FA}}.
	\end{equation}
	The diversity-order result shows that the $N$ FA ports
	determine the diversity order of HPFAS, whereas the PA does
	not provide an additional statistically independent diversity
	branch. Moreover, for every user location,
	\begin{equation}
		\phi_y^2+h^2
		\leq
		\left(\phi_x-\frac{D_1}{2}\right)^2
		+\phi_y^2+h^2,
	\end{equation}
	which leads to
	$A_{\mathrm{H}}\leq A_{\mathrm{FA}}$.
	Therefore, although HPFAS and FA-only achieve the same
	diversity order, the PA reduces the asymptotic outage
	coefficient through link-geometry adjustment. A smaller
	asymptotic outage coefficient corresponds to a lower outage
	probability in the high-SNR regime.
\end{remark}

\endgroup
\endgroup

\section{System Model and Performance Analysis for Multi-User HPFAS}
\subsection{Wireless Signal Model}
Consider a base station equipped with $K$ dielectric waveguides, each supporting
$M$ pinching antennas, serving multiple single-antenna users via
hybrid-antenna multiple access.
Each waveguide transmits an independent data stream dedicated to one user.
Without loss of generality, we assume that the total transmit power allocated
to each waveguide is identical and is denoted by $P$. The received signal at the $k$-th user is given by
\begin{equation}
	r_{k} =
	\underbrace{\sqrt{P}\sum_{m=1}^{M} h_{n,b(n)}^{k,m} w_{k}^{m} s_{k}}_{\text{desired signal}}
	+ \underbrace{\sqrt{P}\sum_{i\neq k}\sum_{m=1}^{M} h_{n,b(n)}^{i,m} w_{i}^{m} s_{i}}_{\text{inter-user interference}}
	+ z_{k},
\end{equation}
where $h_{n,b(n)}^{k,m}$ is the corresponding channel coefficient between $m$-th PA and the $n$-th port  of the $k$-th user, $w_k^m$ denotes the beamforming coefficient associated with the $m$-th PA on the $k$-th waveguide, satisfying the unit-norm constraint $	\sum_{m=1}^{M} |w_k^m|^2 = 1$, and $z_k$ represents the additive white Gaussian noise at the $k$-th user. Accordingly, the signal-to-interference ratio (SIR)\footnote{We use the signal-to-interference ratio (SIR) instead of the signal-to-noise ratio (SNR) by assuming an interference-limited scenario, where the interference power dominates the noise power.} of the $k$-th user is given by
\begin{equation}\label{SIRmultiple}
	\text{SIR}_k =
	\frac{\left|\sum_{m=1}^{M} h_{n,b(n)}^{k,m} w_{k}^{m}\right|^{2}}
	{\left|\sum_{i\neq k}\sum_{m=1}^{M} h_{n,b(n)}^{i,m} w_{i}^{m}\right|^{2}}.
\end{equation}
Although \eqref{SIRmultiple} provides a general performance metric, an analytically tractable expression for the outage probability with arbitrary $K$ and $M$ is unavailable. Therefore, we focus on a representative special case to gain fundamental insights.
\subsection{Special Case: Two Waveguides, Two Users, and One Antenna per Waveguide ($K=2$, $M=1$)}
We consider a simplified scenario with two waveguides, two users and one PA for each waveguide. The received signals at users~1 and~2 are respectively given by
\begin{equation}
	r_{1,n} =
	\underbrace{\sqrt{P} h_{n,b(n)}^{1} s_{1}}_{\text{desired signal}}
	+ \underbrace{\sqrt{P} h_{n,b(n)}^{2} s_{2}}_{\text{interference}}
	+ z_{1},
\end{equation}
\begin{equation}
	r_{2,n} =
	\underbrace{\sqrt{P} h_{n,b(n)}^{2} s_{2}}_{\text{desired signal}}
	+ \underbrace{\sqrt{P} h_{n,b(n)}^{1} s_{1}}_{\text{interference}}
	+ z_{2}.
\end{equation}
where $s_1, s_2 \in \mathbb{C}$ denote the downlink signals intended for the UEs, each with unit power, \textit{i.e.}, $\mathbb{E}\!\left[|s_1|^2\right] = \mathbb{E}\!\left[|s_2|^2\right] = 1$, and $z_1$ and $z_2$ represent the additive noise terms. 
The planar area is then partitioned into two disjoint regions, namely Region~1 and Region~2, where user~1 and user~2 are uniformly distributed in Region~1 and Region~2, respectively. 
The locations of user~1 and user~2 are denoted by $\boldsymbol{\psi}_1=(\phi_{x_1},\phi_{y_1},0)$ and $\boldsymbol{\psi}_2=(\phi_{x_2},\phi_{y_2},0)$, where $\phi_{x_1}, \phi_{x_2}\in(0,D_1)$, $\phi_{y_1}\in(-\frac{D_2}{2},0)$, and $\phi_{y_2}\in(0,\frac{D_2}{2})$. 
The corresponding pinching antennas are assumed to be deployed in Region~1 and Region~2, with locations given by $\boldsymbol{\psi}_1^{\text{Pin}}=(\phi_{x_1,p},-\frac{D_2}{4},h)$ and $\boldsymbol{\psi}_2^{\text{Pin}}=(\phi_{x_2,p},\frac{D_2}{4},h)$, respectively. 
The resulting SIR at user~1 for the $n$-th port realization is expressed as
\begin{equation}
	\begin{aligned}
	&\text{SIR}_{1,n}\\
	&=\frac{
		\left|\boldsymbol{\psi}_1-\boldsymbol{\psi}_2^{\text{Pin}}\right|^{\varepsilon}
		\left|\sqrt{2\kappa} e^{-j\frac{2\pi d_1}{\lambda_g}-j\frac{2\pi}{\lambda}|\boldsymbol{\psi}_1-\boldsymbol{\psi}_1^{\text{Pin}}|}
		+h_{n,b(n)}^{(1,\text{NLoS})}\right|^{2}
	}{
		\left|\boldsymbol{\psi}_1-\boldsymbol{\psi}_1^{\text{Pin}}\right|^{\varepsilon}
		\left|\sqrt{2\kappa} e^{-j\frac{2\pi d_2}{\lambda_g}-j\frac{2\pi}{\lambda}|\boldsymbol{\psi}_1-\boldsymbol{\psi}_2^{\text{Pin}}|}
		+h_{n,b(n)}^{(2,\text{NLoS})}\right|^{2}
	}.			
	\end{aligned}
\end{equation}
{\color{black}where $d_1$ denotes the distance between the PA and the waveguide feeding point of the first waveguide, and $d_2$ denotes that of the second waveguide.}
Without loss of generality, we focus on analyzing the outage probability of user~1, and the obtained results are equally applicable to user~2. Accordingly, the outage probability of user~1 is defined as
\begin{equation}
	P_{\text{out}}
	=\Pr\!\left(\max_{n}\text{SIR}_{1,n}<\gamma_{\text{th}}\right).
\end{equation}
We assume that $\phi_{x_1,p}=\phi_{x_1}$ and $\phi_{x_2,p}=\phi_{x_2}$\footnote{This assumption is introduced to decouple the spatial dependence among different users and to facilitate a tractable performance analysis.}. Under this assumption, the outage probability of the proposed HPFAS is given in the following theorem.
\begin{theorem}
	The outage probability of the proposed HPFAS scheme in the multi-user scenario is given by \eqref{OP-MU-FA-PA},	where $\widetilde{\gamma}_{\text{th}}
	=\frac{|\boldsymbol{\psi}_1-\boldsymbol{\psi}_1^{\text{Pin}}|^{\varepsilon}}
	{|\boldsymbol{\psi}_1-\boldsymbol{\psi}_2^{\text{Pin}}|^{\varepsilon}}\gamma_{\text{th}}$. 
	\begin{figure*}[t]
	\begin{equation}\label{OP-MU-FA-PA}
	\begin{aligned}
		&	P_{\text{out}}=
		\int_{\phi_{x_1}=0}^{D_1}
		\int_{\phi_{x_2}=0}^{D_1}
		\int_{\phi_{y_1}=-\frac{D_2}{2}}^{0}\frac{2}{D_1^2D_2}
		\prod_{b=1}^{B} \int_{\widetilde{r}_b=0}^{\infty}\int_{r_b=0}^{\infty} \frac{1}{4} \exp\left(-\frac{r_b + \frac{2\kappa}{\mu^2}}{2}\right) I_0\left(\sqrt{\frac{2\kappa r_b}{\mu^2}}\right)  \exp\left(-\frac{\widetilde{r}_b + \frac{2\kappa}{\mu^2}}{2}\right) I_0\left(\sqrt{\frac{2\kappa}{\mu^2}\widetilde{r}_b}\right)\\
		&\times\left[Q_1\!\left(
		\sqrt{\frac{\mu^2 \widetilde{\gamma}_{\text{th}} \widetilde{r}_b}{(1-\mu^2)(\widetilde{\gamma}_{\text{th}}+1)}},
		\sqrt{\frac{\mu^2 r_b}{(1-\mu^2)(\widetilde{\gamma}_{\text{th}}+1)}}
		\right) 
		- \frac{e^{	-\frac{\mu^2(\widetilde{\gamma}_{\text{th}} \widetilde{r}_b + r_b)}{2(1-\mu^2)(\widetilde{\gamma}_{\text{th}}+1)}}}{\widetilde{\gamma}_{\text{th}}+1}
		I_0\!\left(
		\frac{\mu^2 \sqrt{\widetilde{\gamma}_{\text{th}} r_b \widetilde{r}_b}}{(1-\mu^2)(\widetilde{\gamma}_{\text{th}}+1)}
		\right)\right]^{L_b}dr_bd\widetilde{r}_bd\phi_{x_1}d\phi_{x_2}d\phi_{y_1}\\
	\end{aligned}
\end{equation}
		\hrulefill
	\end{figure*}
\end{theorem}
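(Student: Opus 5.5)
The plan is to follow the single-user proof of the first theorem, but with a ratio of two Rician-type variables in place of a single one.

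\textbf{Step 1: reduce the outage event to a comparison of two quadratic forms.} Under the alignment assumption $\phi_{x_1,p}=\phi_{x_1}$, $\phi_{x_2,p}=\phi_{x_2}$, both distances are deterministic given the user locations:
\begin{equation*}
|\boldsymbol{\psi}_1-\boldsymbol{\psi}_1^{\text{Pin}}| = \sqrt{(\phi_{y_1}+D_2/4)^2+h^2},
\end{equation*}
\begin{equation*}
|\boldsymbol{\psi}_1-\boldsymbol{\psi}_2^{\text{Pin}}| = \sqrt{(\phi_{x_1}-\phi_{x_2})^2+(\phi_{y_1}-D_2/4)^2+h^2}.
\end{equation*}
The event $\text{SIR}_{1,n}<\gamma_{\text{th}}$ is then equivalent to
\begin{equation*}
\Big|\sqrt{2\kappa}e^{j\vartheta_1}+h_{n,b(n)}^{(1,\text{NLoS})}\Big|^2<\widetilde{\gamma}_{\text{th}}\Big|\sqrt{2\kappa}e^{j\vartheta_2}+h_{n,b(n)}^{(2,\text{NLoS})}\Big|^2,
\end{equation*}
where $\vartheta_1,\vartheta_2$ are the LoS phases and $\widetilde{\gamma}_{\text{th}}$ is as defined in the theorem.

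Exactly as in \eqref{Xn}, I would absorb each LoS term into the block-common Gaussians. This gives $\tilde{x}_b,\tilde{y}_b$ with means $\frac{\sqrt{2\kappa}}{\mu}\cos\vartheta_1$ and $\frac{\sqrt{2\kappa}}{\mu}\sin\vartheta_1$ for the desired link. The interfering link gets an independent pair $\tilde{x}'_b,\tilde{y}'_b$ with the analogous means built from $\vartheta_2$. Dividing both sides by $1-\mu^2$, the event becomes $\Theta_n^{(1)}<\widetilde{\gamma}_{\text{th}}\Theta_n^{(2)}$. Set $r_b=\tilde{x}_b^2+\tilde{y}_b^2$ and $\widetilde{r}_b={\tilde{x}_b}'^2+{\tilde{y}_b}'^2$. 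Conditioned on $(r_b,\widetilde{r}_b)$:
\begin{itemize}
\item $\Theta_n^{(1)}$ and $\Theta_n^{(2)}$ are independent non-central $\chi^2_2$ variables, with non-centralities $\lambda_1=\frac{\mu^2 r_b}{1-\mu^2}$ and $\lambda_2=\frac{\mu^2\widetilde{r}_b}{1-\mu^2}$.
\item Different ports in the same block are conditionally independent.
\item $r_b$ and $\widetilde{r}_b$ are independent, each with the density $f_{r_b}$ already given. The LoS phases drop out because only $|\cdot|^2$ of the mean matters.
\end{itemize}

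\textbf{Step 2 (main obstacle): the conditional per-port probability.} I need $P(\lambda_1,\lambda_2)=\Pr(X<\widetilde{\gamma}_{\text{th}}Y)$ for independent $X\sim\chi'^2_2(\lambda_1)$ and $Y\sim\chi'^2_2(\lambda_2)$. My first attempt would be to condition on $Y=y$, which gives $1-Q_1(\sqrt{\lambda_1},\sqrt{\widetilde{\gamma}_{\text{th}}y})$. I would then average this over the Rician density of $\sqrt{Y}$, using the standard integral of a Marcum $Q$-function against a Rician PDF (Nuttall's identity; see also \cite{simon2002probability}). This should yield
\begin{equation*}
Q_1\!\left(\sqrt{\tfrac{\widetilde{\gamma}_{\text{th}}\lambda_2}{\widetilde{\gamma}_{\text{th}}+1}},\sqrt{\tfrac{\lambda_1}{\widetilde{\gamma}_{\text{th}}+1}}\right)-\tfrac{1}{\widetilde{\gamma}_{\text{th}}+1}e^{-\frac{\widetilde{\gamma}_{\text{th}}\lambda_2+\lambda_1}{2(\widetilde{\gamma}_{\text{th}}+1)}}I_0\!\left(\tfrac{\sqrt{\widetilde{\gamma}_{\text{th}}\lambda_1\lambda_2}}{\widetilde{\gamma}_{\text{th}}+1}\right).
\end{equation*}
Substituting $\lambda_1,\lambda_2$ gives exactly the bracket in \eqref{OP-MU-FA-PA}. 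As a fallback, I would use Simon's general result for $\Pr(|A|^2-\widetilde{\gamma}_{\text{th}}|B|^2<0)$ with independent complex Gaussians $A,B$, and then check the limiting cases $\lambda_2=0$ and $\widetilde{\gamma}_{\text{th}}\to\infty$ to confirm the signs and arguments.

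\textbf{Step 3: assemble and average over locations.} By conditional independence, the maximum over the $L_b$ ports in block $b$ contributes $P(\lambda_1,\lambda_2)^{L_b}$. Independence across blocks gives the product over $b$, with each factor averaged against $f_{r_b}(r_b)f_{r_b}(\widetilde{r}_b)$; this produces the prefactor $\frac14$ and the two $\exp\cdot I_0$ terms.

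Finally, I would average over the locations. The conditional probability depends only on $\phi_{x_1},\phi_{x_2},\phi_{y_1}$, through $\widetilde{\gamma}_{\text{th}}$. The coordinate $\phi_{y_2}$ does not appear, because the second PA sits at the fixed ordinate $D_2/4$, so it integrates out to $1$. The remaining uniform density is $\frac{1}{D_1}\cdot\frac{1}{D_1}\cdot\frac{2}{D_2}=\frac{2}{D_1^2D_2}$, which completes \eqref{OP-MU-FA-PA}.
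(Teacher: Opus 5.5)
Your proposal is correct and takes essentially the same route as the paper. You absorb the LoS terms into the block-common Gaussians to obtain the desired and interference variables ($\vartheta_n,\varrho_n$ in the paper), condition on $(r_b,\widetilde{r}_b)$, and obtain the per-port probability by integrating $1-Q_1(\sqrt{\lambda_1},\sqrt{\widetilde{\gamma}_{\text{th}}y})$ against the non-central $\chi^2_2$ density of the interference term; this is exactly the intermediate form in step (a) of the paper's subsequent lemma, after which you raise to $L_b$, take the product over blocks, and average over $(\phi_{x_1},\phi_{x_2},\phi_{y_1})$. Your Nuttall-identity evaluation does reproduce the bracket in \eqref{OP-MU-FA-PA}, which makes explicit the closed-form step the paper omits by deferring to the cited appendix.
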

\begin{IEEEproof}
	We first define the random variables $\vartheta_n$ and $\varrho_n$, which represent the effective desired signal power and interference power, respectively, as
	\begin{equation}
		\begin{aligned}
			\vartheta_n &=
			\left(
			x_{n,b(n)}^{(1)} + \frac{\mu \tilde{x}_{b(n)}^{(1)}}{\sqrt{1-\mu^2}}
			\right)^2
			+
			\left(
			y_{n,b(n)}^{(1)} + \frac{\mu \tilde{y}_{b(n)}^{(1)}}{\sqrt{1-\mu^2}}
			\right)^2, \\
			\varrho_n &=
			\left(
			x_{n,b(n)}^{(2)} + \frac{\mu \tilde{x}_{b(n)}^{(2)}}{\sqrt{1-\mu^2}}
			\right)^2
			+
			\left(
			y_{n,b(n)}^{(2)} + \frac{\mu \tilde{y}_{b(n)}^{(2)}}{\sqrt{1-\mu^2}}
			\right)^2 ,
		\end{aligned}
	\end{equation}
where, for $i\in\{1,2\}$,
$\left[\tilde{x}_{b(n)}^{(i)},\tilde{y}_{b(n)}^{(i)}\right]^{\mathsf T}$
follows a complex Gaussian distribution with mean
\begin{equation*}
\frac{\sqrt{2\kappa}}{\mu}
\begin{bmatrix}
	\cos\!\left(\frac{2\pi}{\lambda}\left|\boldsymbol{\psi}_1-\boldsymbol{\psi}_i^{\text{Pin}}\right|+\frac{2\pi d_1}{\lambda_g}\right)\\
	\sin\!\left(\frac{2\pi}{\lambda}\left|\boldsymbol{\psi}_1-\boldsymbol{\psi}_i^{\text{Pin}}\right|+\frac{2\pi d_2}{\lambda_g}\right)
\end{bmatrix}
\end{equation*} and unit variance. Then, the outage probability can be rewritten as
	\begin{equation}
		P_{\text{out}}=
		\Pr\!\left(
		\max_{n}
		\frac{\vartheta_n}{\varrho _n}
		<\widetilde{\gamma}_{\text{th}}
		\right).
	\end{equation}
	After that, the results can be obtained by following the similar steps as
	in~\cite[Appendix B]{10623405}. The detailed derivation is omitted for brevity.
\end{IEEEproof}

 {\color{black}It can be observed that \eqref{OP-MU-FA-PA} is highly complex, which hinders the extraction of meaningful insights. To address this issue, we adopt the SFA method to approximate the expression again, thereby obtaining a more concise and insightful result. After that, the outage probability can be approximated as stated in the following lemma.
\begin{lemma}
The outage probability of the proposed HPFAS scheme for the two-user case can be approximated as given in \eqref{OP-MU-FA-PA-approximate}. The threshold function is 
	\begin{figure*}[t]
	\begin{equation}\label{OP-MU-FA-PA-approximate}
	P_{\text{out}}=\int_{\phi_{x_1}=0}^{D_1}\int_{\phi_{x_2}=0}^{D_1}\int_{\phi_{y_1}=-\frac{D_2}{2}}^{0}\frac{2}{D_1^2D_2}\prod_{b=1}^{B} \int_{0}^{\infty}  \frac{1}{2} \exp\left(-\frac{r_b + \frac{2\kappa}{\mu^2}}{2}\right) I_0\left(\sqrt{\frac{2\kappa}{\mu^2}r_b}\right)Q_1\left(\sqrt{\frac{2\kappa}{\mu^2}}, \sqrt{\tilde{\delta}\left(r_b,L_b\right)}\right)dr_bd\phi_{x_1}d\phi_{x_2}d\phi_{y_1}
\end{equation}
	\hrulefill
\end{figure*}
$
\tilde{\delta}\left(r_b,L_b\right) = \frac{\left(1-\mu^2\right)}{\mu^2} \left[ \delta(r_b) + \frac{\frac{L_b - 1}{\sqrt{2\pi}} + \frac{1}{2\delta(r_b)}}{\frac{L_b - 1}{\sqrt{2\pi}} \frac{1}{2\delta(r_b)} + 2} \right]^2,
$
and
$
\delta(r_b) = \sqrt{ \frac{\mu^{2} r_b}{2(1 - \mu^{2}) \widetilde{\gamma}_{\text{th}} } }+ \sqrt{ \frac{\mu^{2} r_b}{2(1 - \mu^{2}) \widetilde{\gamma}_{\text{th}}} + \frac{1}{2 \widetilde{\gamma}_{\text{th}} }}.
$
\end{lemma}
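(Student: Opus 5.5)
Starting from the exact expression in \eqref{OP-MU-FA-PA}, I will apply the SFA only to the per-port conditional outage factor, and leave the two spatial integrals and the integral over the desired-signal block variable $r_b$ untouched. The goal is to collapse the $L_b$-th power of the bracketed bivariate term, together with the integral over the interference block variable $\widetilde r_b$, into a single Marcum $Q$-function whose threshold depends on $r_b$.

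\textbf{Step 1: condition and reduce the per-port event.} Fix the user locations, $r_b$ and $\widetilde r_b$. The ports in block $b$ are then conditionally i.i.d., with
\[
\vartheta_n\sim\chi'^2_2\!\Big(\tfrac{\mu^2 r_b}{1-\mu^2}\Big),\qquad
\varrho_n\sim\chi'^2_2\!\Big(\tfrac{\mu^2 \widetilde r_b}{1-\mu^2}\Big),
\]
and these two are independent. The bracket in \eqref{OP-MU-FA-PA} is the conditional probability $\Pr(\vartheta_n<\widetilde\gamma_{\text{th}}\varrho_n\mid r_b,\widetilde r_b)$, raised to the power $L_b$. I will take the order of conditioning the other way round: condition on $r_b$ only, and treat the interference block mean through $\widetilde r_b$ as the quantity to be averaged.

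\textbf{Step 2: reduce $\Pr(\vartheta<\widetilde\gamma\varrho)$ to a threshold on a Rician amplitude.} Following the treatment of the SIR ratio in \cite[Appendix B]{10623405} and the SFA in \cite[Lemma 1, Lemma 2]{11184149}, I will approximate the single-port ratio outage as a step in $\sqrt{\varrho}$. The outage occurs when the Rician amplitude associated with the interference exceeds an effective threshold $\delta(r_b)$, obtained by solving the quadratic in $\sqrt{\cdot}$ from matching $\vartheta\approx$ its mean. This matching gives
\[
\sqrt{x}=\sqrt{\tfrac{\mu^2 r_b}{2(1-\mu^2)\widetilde\gamma}}+\sqrt{\tfrac{\mu^2 r_b}{2(1-\mu^2)\widetilde\gamma}+\tfrac{1}{2\widetilde\gamma}},
\]
which produces the stated $\delta(r_b)$.

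\textbf{Step 3: apply the SFA to the maximum over ports.} The SFA lemma of \cite{11184149} replaces the $L_b$-th power of a Marcum-$Q$-type CDF by a single indicator. Its threshold is shifted by a correction term depending on $L_b-1$ and the slope of the Gaussian tail, namely the $\frac{L_b-1}{\sqrt{2\pi}}$ terms. I will plug $\delta(r_b)$ into that correction, which yields the bracket in $\tilde\delta(r_b,L_b)$. Rescaling by $\frac{1-\mu^2}{\mu^2}$ converts the per-port amplitude to the block variable $\widetilde r_b$.

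\textbf{Step 4: integrate out $\widetilde r_b$.} The condition becomes $\widetilde r_b>\tilde\delta(r_b,L_b)$. Since $\widetilde r_b\sim\chi'^2_2(2\kappa/\mu^2)$, integrating its PDF over this region gives $Q_1\big(\sqrt{2\kappa/\mu^2},\sqrt{\tilde\delta}\big)$ by \cite[Eq.~(2.45)]{simon2002probability}. This leaves the $r_b$ integral weighted by $f_{r_b}$, and averaging over the user locations gives \eqref{OP-MU-FA-PA-approximate}.

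The main obstacle is Step 2–3. The ratio event is not a single Marcum-$Q$ threshold, so justifying the linearization of $\Pr(\vartheta<\widetilde\gamma\varrho)$ around its transition point requires care. Matching the SFA correction constants exactly to the form of $\tilde\delta$ is also delicate. I would first try matching the value and the slope of the exact conditional CDF at its median crossing, as in \cite[Appendix B]{11184149}.
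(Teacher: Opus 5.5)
Your skeleton matches the paper's chain (a)--(d). First, a step-function approximation turns the per-port ratio event into a threshold event on $\varrho_n$. The resulting tail is a first-order Marcum $Q$ in the interference non-centrality. A second SFA is then applied to its $L_b$-th power as a step in $\widetilde r_b$. Finally, $\int_{\tilde\delta}^{\infty} f_{\widetilde r_b}(\widetilde r_b)\,d\widetilde r_b=Q_1(\sqrt{2\kappa/\mu^2},\sqrt{\tilde\delta})$. Your Steps 1 and 4 correspond to the paper's step (a) and the final integration after (d), and the mechanism of your Step 2 (a step in $\varrho_n$) is the paper's step (b). The gap is in the location of that first step and in the formula you claim for it.

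The threshold you assign to the first step is not derived, and the derivation you sketch fails. Conditional on $r_b$, $\vartheta_n$ is non-central $\chi^2$ with two degrees of freedom and non-centrality $\lambda=\frac{\mu^2 r_b}{1-\mu^2}$, so $\mathbb{E}[\vartheta_n\mid r_b]=2+\lambda$. Replacing $\vartheta_n$ by its mean gives the linear condition $\varrho_n>(2+\lambda)/\widetilde{\gamma}_{\text{th}}$. There is no quadratic to solve, and the amplitude threshold $\sqrt{(2+\lambda)/\widetilde{\gamma}_{\text{th}}}$ is not $\delta(r_b)$. Indeed, $\delta(r_b)^2=\big(\lambda+\tfrac12+\sqrt{\lambda(\lambda+1)}\big)/\widetilde{\gamma}_{\text{th}}$, so the squared thresholds differ by a factor $4$ at $r_b=0$ and by roughly $2$ for large $\lambda$. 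Your claim that mean matching produces the stated $\delta(r_b)$ is therefore false, and Step 3 feeds an underived quantity into the SFA correction.

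This is also not what the paper does. In step (b) it replaces the desired-signal CDF $1-Q_1\big(\sqrt{\lambda},\sqrt{\widetilde{\gamma}_{\text{th}}y_n}\big)$ by a unit step at the non-centrality, i.e.\ it moves the lower limit of the $y_n$-integral to $\lambda/\widetilde{\gamma}_{\text{th}}$, not to the mean. Step (c) then gives the per-port factor exactly as $Q_1\big(\sqrt{\mu^2\widetilde r_b/(1-\mu^2)},\sqrt{\mu^2 r_b/((1-\mu^2)\widetilde{\gamma}_{\text{th}})}\big)$. The whole threshold $\tilde\delta(r_b,L_b)$, including $\delta(r_b)$, is attributed to the second SFA in step (d). There the paper simply invokes the SFA lemma of \cite{11184149} on the $L_b$-th power of this $Q_1$, viewed as an increasing function of $\widetilde r_b$, without deriving $\delta(r_b)$ separately.

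To repair your argument, drop the moment-matching step and follow (b)--(c). If you instead want $\delta(r_b)$ to be a first-stage threshold, you must state the specific step-location rule that yields it, and matching the mean does not.
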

\begin{IEEEproof}
\eqref{OP-MU-FA-PA} can be rewritten as shown in \eqref{OP-MU-FA-PA-approximate-proof}, where 
 $\left(a\right)$ follows from similar derivations as in \eqref{OP-FAPA}, $\left(b\right)$ is obtained by applying SFA,  $\left(c\right)$ follows from the definition of the generalized Marcum $Q$-function, and $\left(d\right)$ is derived by applying SFA again. Finally, \eqref{OP-MU-FA-PA-approximate} is obtained, and the proof ends.
\end{IEEEproof}
}
{\color{black}
\begin{corollary}
	When both users~1 and~2 are equipped with conventional antennas rather than the FA, the outage probability is given by
		\begin{equation}\label{OP-PA-only-final-new}
	\begin{aligned}
		P_{\text{out}}
		&=\int_{0}^{D_1}\int_{0}^{D_1}\int_{-\frac{D_2}{2}}^{0}\frac{2}{D_1^2D_2}
		\left[Q_{1}\left( \frac{\sqrt{2\kappa\widetilde{\gamma}_{\text{th}}} }{\sqrt{1+\widetilde{\gamma}_{\text{th}}}},  \sqrt{\frac{2\kappa}{1+\widetilde{\gamma}_{\text{th}}}} \right)\right.\\
		&\left.-\left( \frac{\exp\left(-\kappa\right)}{1+\widetilde{\gamma}_{\text{th}}} \right)I_0\left(\frac{2\kappa\sqrt{\widetilde{\gamma}_{\text{th}}}}{1+\widetilde{\gamma}_{\text{th}}}\right)\right]
		d\phi_{x_1}d\phi_{x_2}d\phi_{y_1}.
		\end{aligned}
	\end{equation}
\end{corollary}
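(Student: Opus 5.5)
The plan is to view the conventional-antenna case as the single-port specialization of the multi-user HPFAS, so that $N=1$, $B=1$, $L_1=1$ and the port maximization in $P_{\text{out}}=\Pr(\max_n \text{SIR}_{1,n}<\gamma_{\text{th}})$ disappears. There is then no spatial correlation structure to exploit. As in the proof of Corollary~\ref{pro1}, the desired and interference terms collapse to two independent noncentral chi-square variables with two degrees of freedom. We can either set $\mu\to 0$ in the construction of $\vartheta_n,\varrho_n$, or, more cleanly, write the single-antenna channel directly: $\sqrt{2\kappa}e^{-j\theta_i}+h^{(i,\text{NLoS})}$ with $h^{(i,\text{NLoS})}=x^{(i)}+jy^{(i)}$ and unit-variance Gaussian components. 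This gives
\begin{equation*}
\vartheta=|\sqrt{2\kappa}e^{-j\theta_1}+x^{(1)}+jy^{(1)}|^2,\qquad \varrho=|\sqrt{2\kappa}e^{-j\theta_2}+x^{(2)}+jy^{(2)}|^2 .
\end{equation*}
Both are noncentral $\chi^2_2$ with noncentrality $2\kappa$, independent, and the phases $\theta_i$ are irrelevant. Conditioned on the user and PA positions, with the alignment $\phi_{x_i,p}=\phi_{x_i}$, the outage event becomes $\vartheta<\widetilde{\gamma}_{\text{th}}\varrho$, with $\widetilde{\gamma}_{\text{th}}$ exactly as defined in the multi-user theorem.

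The core step is evaluating $\Pr(\vartheta<\widetilde{\gamma}_{\text{th}}\varrho)$ for two independent Rician-power variables with equal $K$-factors. I would invoke the classical result for $\Pr(|X_1|^2<\gamma|X_2|^2)$ with $X_i$ complex Gaussian of means $m_i$ and equal variances, as in \cite{simon2002probability}:
\begin{equation*}
\Pr(|X_1|^2<\gamma |X_2|^2)=Q_1(a,b)-\frac{\gamma}{1+\gamma}e^{-\frac{a^2+b^2}{2}}I_0(ab).
\end{equation*}
Here $a^2=\frac{\gamma|m_2|^2}{(1+\gamma)\sigma^2}$ and $b^2=\frac{|m_1|^2}{(1+\gamma)\sigma^2}$ in per-dimension normalization. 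With $|m_1|^2=|m_2|^2=2\kappa$ and unit per-dimension variance, this gives $a=\sqrt{2\kappa\widetilde{\gamma}_{\text{th}}/(1+\widetilde{\gamma}_{\text{th}})}$ and $b=\sqrt{2\kappa/(1+\widetilde{\gamma}_{\text{th}})}$. Then $(a^2+b^2)/2=\kappa$ and $ab=2\kappa\sqrt{\widetilde{\gamma}_{\text{th}}}/(1+\widetilde{\gamma}_{\text{th}})$, which matches the exponential and Bessel arguments in \eqref{OP-PA-only-final-new}. As a cross-check, the same expression is the $L_b=1$ case of the bracket in \eqref{OP-MU-FA-PA} after integrating out $r_b,\widetilde{r}_b$, so this step is consistent with the preceding theorem.

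The main obstacle is the prefactor of the subtracted term. The standard formula carries a $\gamma/(1+\gamma)$ or $1/(1+\gamma)$ weight depending on which variable is in the numerator, whereas the statement shows $1/(1+\widetilde{\gamma}_{\text{th}})$, mirroring \eqref{OP-MU-FA-PA}. I would fix this by deriving the formula directly rather than quoting it. Conditioned on $\varrho$, the outage probability is the noncentral $\chi^2_2$ CDF $1-Q_1(\sqrt{2\kappa},\sqrt{\widetilde{\gamma}_{\text{th}}\varrho})$. Averaging over the Rician PDF of $\varrho$ can then be done with the Nuttall-type integral $\int_0^\infty x e^{-p^2x^2/2}I_0(cx)Q_1(a,bx)\,dx$. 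Checking the limit $\widetilde{\gamma}_{\text{th}}\to 0$, where the outage must vanish, fixes which weight is correct, and I would adopt the convention consistent with \eqref{OP-MU-FA-PA}.

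Finally, I would average over the positions. Users are uniform on $(0,D_1)^2\times(-D_2/2,0)$ for $(\phi_{x_1},\phi_{x_2},\phi_{y_1})$, giving joint density $2/(D_1^2D_2)$, and $\phi_{y_2}$ does not enter $\widetilde{\gamma}_{\text{th}}$. This yields the triple integral in \eqref{OP-PA-only-final-new}.
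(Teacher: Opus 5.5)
Your plan is correct. Its direct-derivation version (condition on $\varrho$, use the noncentral $\chi^2_2$ CDF $1-Q_1(\sqrt{2\kappa},\sqrt{\widetilde{\gamma}_{\text{th}}\varrho})$, average over the Rician PDF and close the Nuttall-type integral) is exactly the paper's route, which evaluates $1-\int_0^\infty e^{-\kappa}e^{-r}Q_1(\sqrt{2\kappa},\sqrt{2\widetilde{\gamma}_{\text{th}}r})I_0(2\sqrt{\kappa r})\,dr$ through the identity of \cite{6803901}. The subtracted weight is not a matter of convention, though: with your $a,b$ it must be $1/(1+\widetilde{\gamma}_{\text{th}})$, and the $\gamma/(1+\gamma)$ form you quoted is wrong, since as $\widetilde{\gamma}_{\text{th}}\to0$ it leaves $Q_1(0,\sqrt{2\kappa})=e^{-\kappa}\neq0$; so the limit check you propose does settle it in favor of the stated expression.
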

\begin{IEEEproof}
	Similarly, the outage probability is given by
	\begin{multline}\label{OP-PA-only-final}
		P_{\text{out}}
		=1-\int_{0}^{D_1}\int_{0}^{D_1}\int_{-\frac{D_2}{2}}^{0}\frac{2}{D_1^2D_2}\int_{0}^{\infty}\exp\left(-\kappa\right)\exp\left(-r\right) \\
		\times Q_1\left(\sqrt{ 2\kappa}, \sqrt{2\widetilde{\gamma}_{\text{th}}r}\right) I_0\left(2\sqrt{ \kappa r}\right)drd\phi_{x_1}d\phi_{x_2}d\phi_{y_1}.
	\end{multline}
	To evaluate the \eqref{OP-PA-only-final}, we use an important connection established in \cite{6803901} as
	\begin{multline}
		\int_{0}^{\infty} \exp\left(-t\right) Q_{1}(\alpha, \beta\sqrt{t}) I_{0}\left(c\sqrt{t}\right)dt\\=\exp\left(\frac{c^2}{4}\right)-\text{In}\left(\beta, \alpha, c, 1, 0, 1\right).
	\end{multline}
	where  the result of $\text{In}\left(\beta, \alpha, c, 1, 0, 1\right)$ can be obtained by using the results in \cite[Eq. (24)]{6803901} as
	\begin{multline}		
	\text{In}\left(\beta, \alpha, c, 1, 0, 1\right)=\exp\left(\kappa\right) Q_{1}\left( \frac{\beta c}{\sqrt{2\left(2+\beta^2\right)}}, \alpha \sqrt{\frac{2}{2+\beta^2}} \right)\\
	-2 \left( \frac{1}{2+\beta^2} \right) \exp\left( \frac{\frac{c^2}{2} - \alpha^2}{2+\beta^2} \right)I_0\left(\frac{c\alpha\beta}{2+\beta^2}\right)\\
	=\exp\left(\kappa\right) Q_{1}\left( \frac{\sqrt{2\kappa\widetilde{\gamma}_{\text{th}}} }{\sqrt{1+\widetilde{\gamma}_{\text{th}}}},  \sqrt{\frac{2\kappa}{1+\widetilde{\gamma}_{\text{th}}}} \right)- \frac{I_0\left(\frac{2\kappa\sqrt{\widetilde{\gamma}_{\text{th}}}}{1+\widetilde{\gamma}_{\text{th}}}\right)}{1+\widetilde{\gamma}_{\text{th}}}.
	\end{multline}
	Finally, by substituting $\alpha = \sqrt{2\kappa}$, $\beta = \sqrt{2\widetilde{\gamma}_{\text{th}}}$, and $c = 2\sqrt{\kappa}$, \eqref{OP-PA-only-final-new} is obtained, which completes the proof.
\end{IEEEproof}
\begingroup
\color{red}

Applying the $J$-point Gauss--Legendre quadrature to the three
finite-interval spatial integrations in
\eqref{OP-PA-only-final-new} yields
\begin{equation}\label{OP-PA-only-final-GL}
	\begin{aligned}
		P_{\mathrm{out}}
		={}&
		\frac{1}{8}
		\sum_{i=1}^{J}
		\sum_{j=1}^{J}
		\sum_{\ell=1}^{J}
		\omega_i\omega_j\omega_\ell
		\Bigg[
		Q_1\left(
		\sqrt{
			\frac{2\kappa\widetilde{\gamma}_{i,j,\ell}}
			{1+\widetilde{\gamma}_{i,j,\ell}}
		},
		\sqrt{
			\frac{2\kappa}
			{1+\widetilde{\gamma}_{i,j,\ell}}
		}
		\right)
		\\
		&\qquad
		-
		\frac{\exp(-\kappa)}
		{1+\widetilde{\gamma}_{i,j,\ell}}
		I_0\left(
		\frac{
			2\kappa\sqrt{\widetilde{\gamma}_{i,j,\ell}}
		}{
			1+\widetilde{\gamma}_{i,j,\ell}
		}
		\right)
		\Bigg],
	\end{aligned}
\end{equation}
where $\widetilde{\gamma}_{i,j,\ell}$ denotes
$\widetilde{\gamma}_{\mathrm{th}}$ evaluated at
\begin{equation}
	\phi_{x_1}=\frac{D_1}{2}(\xi_i+1),\qquad
	\phi_{x_2}=\frac{D_1}{2}(\xi_j+1),\qquad
	\phi_{y_1}=\frac{D_2}{4}(\xi_\ell-1),
\end{equation}
and $\xi_i$ and $\omega_i$ denote the Gauss--Legendre nodes
and corresponding weights, respectively.

\endgroup
}
	\begin{figure*}[t]
	\begin{equation}\label{OP-MU-FA-PA-approximate-proof}
		\begin{aligned}
			&P_{\text{out}}
			\overset{\left(a\right)}{=}\int_{\phi_{x_1}=0}^{D_1}\int_{\phi_{x_2}=0}^{D_1}\int_{\phi_{y_1}=-\frac{D_2}{2}}^{0}\frac{2}{D_1^2D_2}		\prod_{b=1}^{B} \int_{\widetilde{r}_b=0}^{\infty}\int_{r_b=0}^{\infty} \frac{1}{2} \exp\left(-\frac{r_b + \frac{2\kappa}{\mu^2}}{2}\right) I_0\left(\sqrt{\frac{2\kappa r_b}{\mu^2}}\right) \frac{1}{2} \exp\left(-\frac{\widetilde{r}_b + \frac{2\kappa}{\mu^2}}{2}\right) I_0\left(\sqrt{\frac{2\kappa\widetilde{r}_b}{\mu^2}}\right)\\
			& \left[\int_{y_n=0}^{\infty}\left[1 - Q_1\left(\sqrt{\frac{\mu^2 r_b}{1-\mu^2}}, \sqrt{\widetilde{\gamma}_{\text{th}}y_n}\right)\right] \times\frac{1}{2} \exp\left(-\frac{y_n + \frac{\mu^2}{1-\mu^2}\widetilde{r}_b}{2}\right)  I_0\left(\sqrt{\frac{\mu^2 \widetilde{r}_b y_n}{1-\mu^2}}\right)dy_n\right]^{L_b}dr_bd\widetilde{r}_bd\phi_{x_1}d\phi_{x_2}d\phi_{y_1}\\
			&\overset{\left(b\right)}{=}\int_{\phi_{x_1}=0}^{D_1}\int_{\phi_{x_2}=0}^{D_1}\int_{\phi_{y_1}=-\frac{D_2}{2}}^{0}	\frac{2}{D_1^2D_2}\prod_{b=1}^{B} \int_{\widetilde{r}_b=0}^{\infty}\int_{r_b=0}^{\infty} \frac{1}{2} \exp\left(-\frac{r_b + \frac{2\kappa}{\mu^2}}{2}\right) I_0\left(\sqrt{\frac{2\kappa r_b}{\mu^2}}\right) \frac{1}{2} \exp\left(-\frac{\widetilde{r}_b + \frac{2\kappa}{\mu^2}}{2}\right) I_0\left(\sqrt{\frac{2\kappa\widetilde{r}_b}{\mu^2}}\right)\\
			&\qquad\qquad\qquad\qquad\qquad\qquad\times \left[\int_{y_n=\frac{\mu^2r_b}{\left(1-\mu^2\right)\widetilde{\gamma}_{\text{th}}} }^{\infty} \frac{1}{2}\exp\left(-\frac{y_n + \frac{\mu^2}{1-\mu^2}\widetilde{r}_b}{2}\right)  I_0\left(\sqrt{\frac{\mu^2 \widetilde{r}_b y_n}{1-\mu^2}}\right)dy_n\right]^{L_b}dr_bd\widetilde{r}_bd\phi_{x_1}d\phi_{x_2}d\phi_{y_1}\\
			&\overset{\left(c\right)}{=}\int_{\phi_{x_1}=0}^{D_1}\int_{\phi_{x_2}=0}^{D_1}\int_{\phi_{y_1}=-\frac{D_2}{2}}^{0}	\frac{2}{D_1^2D_2}\prod_{b=1}^{B} \int_{\widetilde{r}_b=0}^{\infty}\int_{r_b=0}^{\infty}\frac{1}{2} \exp\left(-\frac{r_b + \frac{2\kappa}{\mu^2}}{2}\right) I_0\left(\sqrt{\frac{2\kappa r_b}{\mu^2}}\right) \\
			&\qquad\qquad\qquad\qquad\qquad\times\frac{1}{2} \exp\left(-\frac{\widetilde{r}_b + \frac{2\kappa}{\mu^2}}{2}\right) I_0\left(\sqrt{\frac{2\kappa}{\mu^2}\widetilde{r}_b}\right)\left[Q_1\left(\sqrt{\frac{\mu^2}{1-\mu^2}\widetilde{r}_b}, \sqrt{\frac{\mu^2r_b}{\left(1-\mu^2\right)\widetilde{\gamma}_{\text{th}}} }\right) \right]^{L_b}dr_bd\widetilde{r}_bd\phi_{x_1}d\phi_{x_2}d\phi_{y_1}\\
			&\overset{\left(d\right)}{=} \int_{\phi_{x_1}=0}^{D_1}\int_{\phi_{x_2}=0}^{D_1}\int_{\phi_{y_1}=-\frac{D_2}{2}}^{0}\frac{2}{D_1^2D_2}\prod_{b=1}^{B} \int_{r_b=0}^{\infty} \int_{\widetilde{r}_b=	\tilde{\delta}\left(r_b,L_b\right)}^{\infty} \frac{1}{2} \exp\left(-\frac{r_b + \frac{2\kappa}{\mu^2}}{2}\right) I_0\left(\sqrt{\frac{2\kappa r_b}{\mu^2}}\right)\\
			&\qquad\qquad\qquad\qquad\qquad\qquad\qquad\qquad\qquad\qquad\qquad\qquad \times\frac{1}{2} \exp\left(-\frac{\widetilde{r}_b + \frac{2\kappa}{\mu^2}}{2}\right) I_0\left(\sqrt{\frac{2\kappa}{\mu^2}\widetilde{r}_b}\right)dr_bd\widetilde{r}_bd\phi_{x_1}d\phi_{x_2}d\phi_{y_1}\\
		\end{aligned}
	\end{equation}
	\hrulefill
\end{figure*}
\begin{corollary}
When conventional antennas are employed at the BS instead of pinching antennas, it is assumed that the antenna serving user~1 is located at the center of Region~1, \textit{i.e.}, {\color{black}$\left(\frac{D_1}{2},-\frac{D_2}{4},h\right)$}, while the antenna serving user~2 is located at the center of Region~2, \textit{i.e.}, {\color{black}$\left(\frac{D_1}{2},\frac{D_2}{4},h\right)$}.
the outage probability is given by \eqref{OP-MU-FA-Only},
	\begin{figure*}[t]
	\begin{equation}\label{OP-MU-FA-Only}
	\begin{aligned}
		&	P_{\text{out}}=
		\int_{\phi_{x_1}=0}^{D_1}\int_{\phi_{y_1}=-\frac{D_2}{2}}^{0}	\frac{2}{D_1D_2}	\prod_{b=1}^{B} \int_{\widetilde{r}_b=0}^{\infty}\int_{r_b=0}^{\infty} \frac{1}{4} \exp\left(-\frac{r_b + \frac{2\kappa}{\mu^2}}{2}\right) I_0\left(\sqrt{\frac{2\kappa r_b}{\mu^2}}\right)  \exp\left(-\frac{\widetilde{r}_b + \frac{2\kappa}{\mu^2}}{2}\right) I_0\left(\sqrt{\frac{2\kappa}{\mu^2}\widetilde{r}_b}\right)\\
		&\times\left[Q_1\!\left(
		\sqrt{\frac{\mu^2 \widetilde{\gamma}_{\text{th}} \widetilde{r}_b}{(1-\mu^2)(\widetilde{\gamma}_{\text{th}}+1)}},
		\sqrt{\frac{\mu^2 r_b}{(1-\mu^2)(\widetilde{\gamma}_{\text{th}}+1)}}
		\right) 
		- \frac{e^{	-\frac{\mu^2(\widetilde{\gamma}_{\text{th}} \widetilde{r}_b + r_b)}{2(1-\mu^2)(\widetilde{\gamma}_{\text{th}}+1)}}}{\widetilde{\gamma}_{\text{th}}+1}
		I_0\!\left(
		\frac{\mu^2 \sqrt{\widetilde{\gamma}_{\text{th}} r_b \widetilde{r}_b}}{(1-\mu^2)(\widetilde{\gamma}_{\text{th}}+1)}
		\right)\right]^{L_b}dr_bd\widetilde{r}_bd\phi_{x_1}d\phi_{y_1}\\
	\end{aligned}
\end{equation}
	\hrulefill
\end{figure*}
where the $\widetilde{\gamma}_{\text{th}}=\left[\frac{\left(\phi_{x_1}-\frac{D_1}{2}\right)^2+\left(\phi_{y_1}+\frac{D_2}{4}\right)^2+h^2}{\left(\phi_{x_1}-\frac{D_1}{2}\right)^2+\left(\phi_{y_1}-\frac{D_2}{4}\right)^2+h^2}\right]^{\frac{\varepsilon}{2}}\gamma_{\text{th}}$.
\end{corollary}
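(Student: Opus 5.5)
The plan is to reuse the proof of the multi-user HPFAS theorem essentially unchanged, and to change only the geometry that enters $\widetilde{\gamma}_{\text{th}}$ and the spatial averaging.

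First, replace the two pinching antennas by fixed conventional antennas at $\left(\frac{D_1}{2},-\frac{D_2}{4},h\right)$ and $\left(\frac{D_1}{2},\frac{D_2}{4},h\right)$. User~1 at $\boldsymbol{\psi}_1=(\phi_{x_1},\phi_{y_1},0)$ then sees a desired-link distance and an interfering-link distance given by
\begin{equation*}
\sqrt{\left(\phi_{x_1}-\tfrac{D_1}{2}\right)^2+\left(\phi_{y_1}+\tfrac{D_2}{4}\right)^2+h^2}, \qquad \sqrt{\left(\phi_{x_1}-\tfrac{D_1}{2}\right)^2+\left(\phi_{y_1}-\tfrac{D_2}{4}\right)^2+h^2}.
\end{equation*}
The SIR expression keeps exactly the same structure as before: a ratio of path losses multiplied by a ratio of two Rician-type FA port powers. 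The only changes are in these distances and in the deterministic LoS phases. Normalising by the same $\vartheta_n$ and $\varrho_n$ as in the proof of the multi-user theorem, the outage event becomes $\max_n \vartheta_n/\varrho_n<\widetilde{\gamma}_{\text{th}}$. Here $\widetilde{\gamma}_{\text{th}}$ is $\gamma_{\text{th}}$ times the ratio of these two distances raised to the power $\varepsilon$, which is the expression stated in the corollary.

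Second, condition on the block variables $r_b$ and $\widetilde{r}_b$. Their distributions do not depend on the LoS phases: they are non-central chi-square with noncentrality $2\kappa/\mu^2$, since $\cos^2+\sin^2=1$. For this reason the changed phases do not affect the result, and the conditional per-port computation is identical to step (a) of \eqref{OP-MU-FA-PA-approximate-proof}. One computes $\Pr(\vartheta_n<\widetilde{\gamma}_{\text{th}}\varrho_n\mid r_b,\widetilde{r}_b)$ by integrating the Marcum-$Q$ CDF against the non-central $\chi^2$ density of $\varrho_n$. This integral is then evaluated in closed form as a Marcum $Q$-function minus an $e^{(\cdot)}I_0(\cdot)$ term, using the same identity as in \cite[Appendix B]{10623405}. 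Raising it to the power $L_b$ and taking the product over $b$ reproduces the bracketed integrand of \eqref{OP-MU-FA-Only}.

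Finally, average over user locations. The new $\widetilde{\gamma}_{\text{th}}$ depends only on $(\phi_{x_1},\phi_{y_1})$ and not on $\phi_{x_2}$. Previously $\phi_{x_2}$ entered only through the PA position; now both antennas are fixed. The $\phi_{x_2}$ integral therefore integrates out to $D_1$, and the density $\frac{2}{D_1^2D_2}$ reduces to $\frac{2}{D_1D_2}$ over $\phi_{x_1}\in(0,D_1)$ and $\phi_{y_1}\in(-\frac{D_2}{2},0)$.

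The main obstacle is checking that the closed-form evaluation of the conditional ratio CDF from \cite{10623405} applies with the modified threshold. This holds because the identity depends on the geometry only through the scalar $\widetilde{\gamma}_{\text{th}}$, so the step carries over unchanged.
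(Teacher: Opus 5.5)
Your proposal is correct and follows the same route as the paper: the paper's proof simply specializes the multi-user theorem \eqref{OP-MU-FA-PA} to the fixed BS antenna positions. Your explicit checks fill in what the paper leaves as ``readily obtained'': that the noncentrality $2\kappa/\mu^2$ of the block variables does not depend on the LoS phases, that the per-port closed form depends on the geometry only through $\widetilde{\gamma}_{\text{th}}$, and that averaging over $\phi_{x_2}$ turns $\frac{2}{D_1^2D_2}$ into $\frac{2}{D_1D_2}$.
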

\begin{IEEEproof}
Based on \eqref{OP-MU-FA-PA}, once the antenna locations at the BS are fixed, the desired result can be readily obtained, which completes the proof.
\end{IEEEproof}
\section{Numerical Results}
In this section, the performance of the proposed HPFAS is evaluated through both theoretical analysis and Monte Carlo simulations. The carrier frequency is set to $f_c = 28$ GHz\footnote{This frequency is adopted as a representative mmWave setting to highlight spatial propagation characteristics. The proposed framework, however, is not restricted to mmWave bands and is applicable to other frequency ranges as well.}, with an effective refractive index of $n_{\text{eff}} = 1.4$ \cite{10945421}. {\color{black}The path-loss exponent is set to $\varepsilon = 2.5$. The noise power is set to $\sigma^2 = -80$ dBm.} The size of the FA is set to $W = 2$ with $N = 20$ ports, {\color{black}and the correlation parameter is set to $\mu^2 = 0.97$~\cite{10623405}.}
\begin{figure*}[t]
	\centering
	\begin{minipage}{0.3\textwidth}
		\centering
		\includegraphics[width=\linewidth]{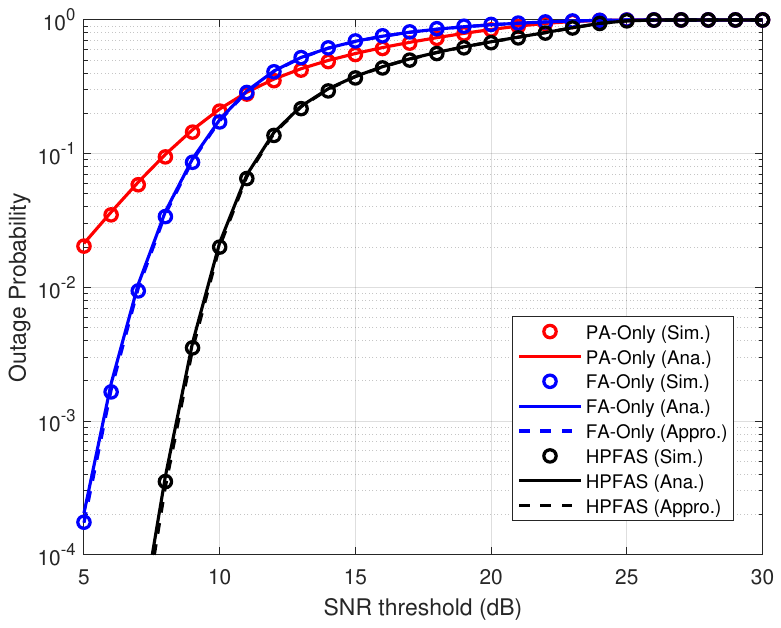}
		\caption{Outage probability versus the $\gamma_{\text{th}}$ in the single-user scenario with $h=3$~m, $D_1=20$~m, $D_2=20$~m, $P_t=15$~dBm, and $\kappa=7$.}
		\label{fig:OP-SNR-SU}
	\end{minipage}
	\hfill
	\begin{minipage}{0.3\textwidth}
		\centering
		\includegraphics[width=\linewidth]{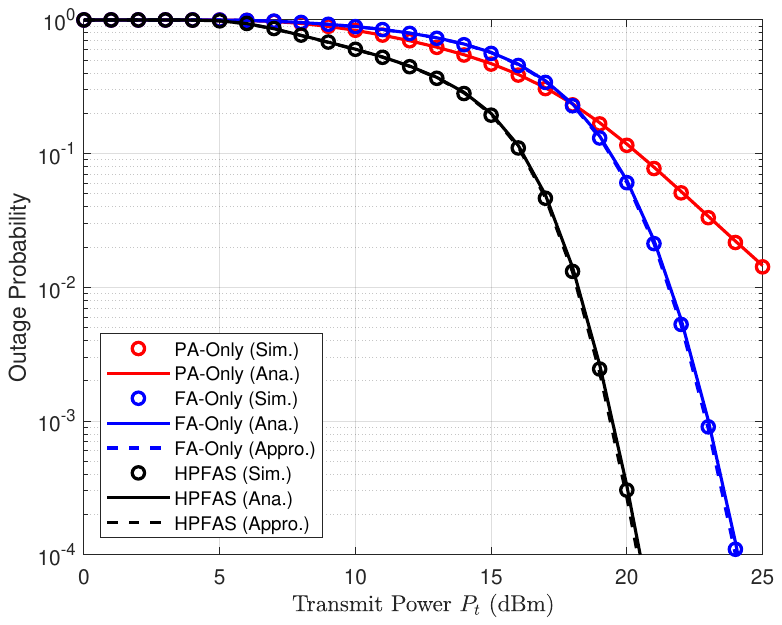}
		\caption{Outage probability versus the $P_t$ in the single-user scenario with $h=4$~m, $D_1=30$~m, $D_2=25$~m,  $\gamma_{\text{th}}=10$ dB, and $\kappa=5$.}
		\label{fig:OP-Pt-SU}
	\end{minipage}
	\hfill
	\begin{minipage}{0.3\textwidth}
		\centering
		\includegraphics[width=\linewidth]{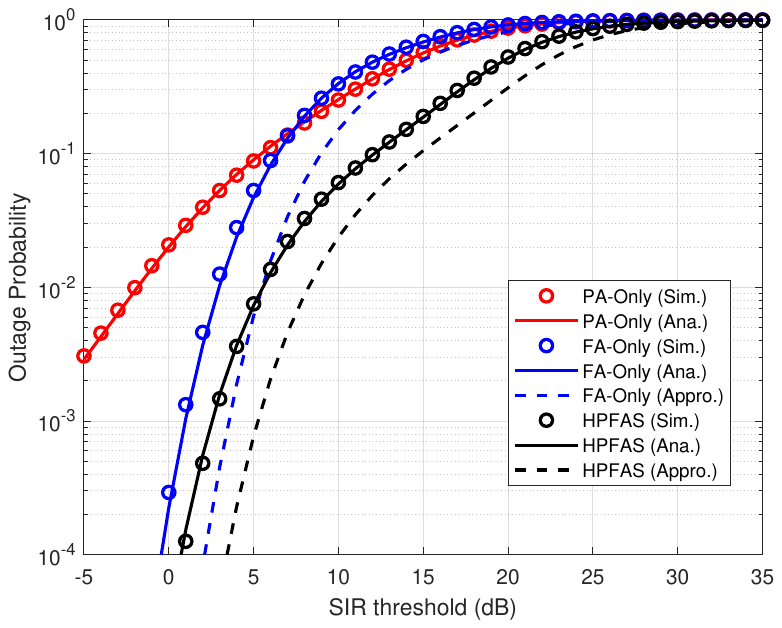}
		\caption{Outage probability versus the SIR threshold in the multi-user scenario with $h=3$, $D_1=D_2=25$ m, and $\kappa=5$.}
		\label{fig:OP-SNR-MU}
	\end{minipage}
\end{figure*}

\Cref{fig:OP-SNR-SU} illustrates the outage probability versus the SNR threshold $\gamma_{\text{th}}$ in the single-user scenario. The analytical results match the Monte Carlo simulations, validating the derived expressions.  The proposed HPFAS consistently outperforms the PA-only and FA-only schemes across all SNR thresholds due to the joint LoS enhancement and spatial diversity. When $D_1 = D_2 = 20$~m, the PA-only scheme performs better at high SNR thresholds where path loss dominates, while the FA-only scheme is superior at low SNR thresholds due to its diversity gain against deep fading. The approximated results closely match the exact analysis while significantly reducing complexity.

\Cref{fig:OP-Pt-SU} depicts the outage probability versus the transmit power $P_t$ for the single-user case. The outage probability decreases monotonically with increasing $P_t$, and the analytical results agree well with simulations, validating the theoretical analysis. The HPFAS scheme consistently outperforms the PA-only and FA-only baselines. 
{\color{black}At low $P_t$, the performance is limited by large-scale path loss, where the PA provides a direct gain by reducing the effective propagation distance. As $P_t$ increases, the system becomes fading-limited, where small-scale variations dominate and the FA becomes more effective due to its spatial diversity gain. The approximation closely matches the analytical results.}

\Cref{fig:OP-SNR-MU} shows the outage probability of a representative user in the two-waveguide, two-antenna, two-user scenario versus the SIR threshold $\gamma_{\text{th}}$. The outage probability increases monotonically with $\gamma_{\text{th}}$, as expected. The analytical results agree well with simulations, validating the analysis. In this interference-limited setting, HPFAS consistently outperforms the PA-only and FA-only baselines, demonstrating the benefit of jointly exploiting pinching-enhanced LoS and fluid-antenna diversity. 
{\color{black}The approximation becomes more accurate in the high-SIR regime, while its accuracy degrades at low SIR due to error accumulation from two successive SFA operations.}

\begingroup
\color{red}

\begin{figure*}[t]
	\centering
	\begin{minipage}{0.3\textwidth}
		\centering
		\includegraphics[width=\linewidth]
		{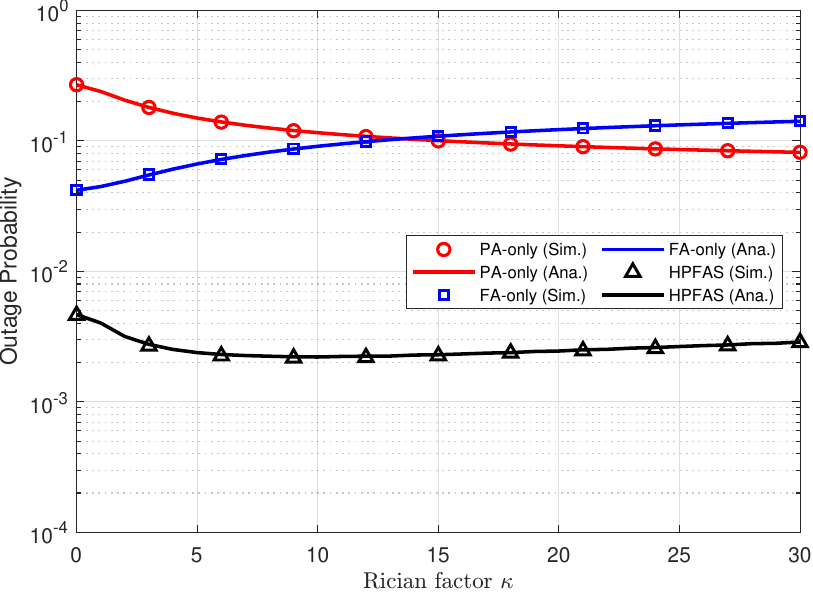}
		\caption{Outage probability versus $\kappa$ for
			$N=20$, $D_1=D_2=25$~m, $P_t=22.5$~dBm, and
			$\gamma_{\mathrm{th}}=14$~dB.}
		\label{fig:c3_kappa}
	\end{minipage}
	\hfill
	\begin{minipage}{0.3\textwidth}
		\centering
		\includegraphics[width=\linewidth]
		{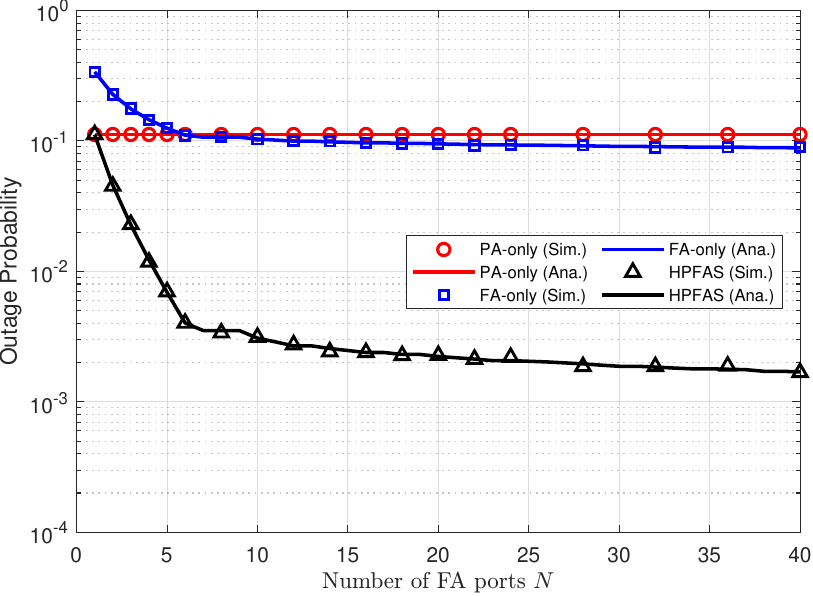}
		\caption{Outage probability versus $N$ for
			$\kappa=11$ and $D_1=D_2=25$~m.}
		\label{fig:c3_N}
	\end{minipage}
	\hfill
	\begin{minipage}{0.3\textwidth}
		\centering
		\includegraphics[width=\linewidth]
		{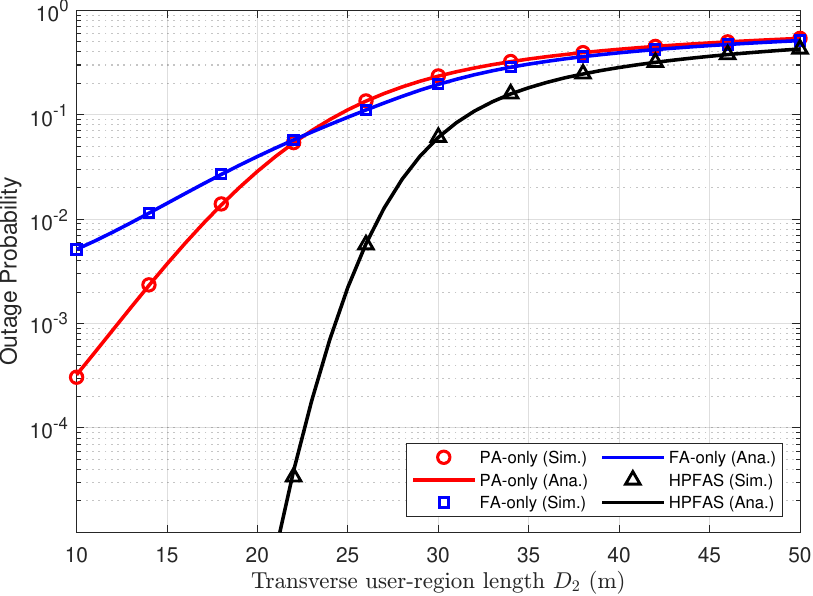}
		\caption{Outage probability versus $D_2$ for
			$\kappa=11$, $N=20$, and $D_1=25$~m.}
		\label{fig:c3_D2}
	\end{minipage}
\end{figure*}

\Cref{fig:c3_kappa} illustrates the outage probability versus the Rician
factor $\kappa$. The analytical results agree well with the Monte Carlo
simulations. The PA-only and FA-only curves intersect at
$\kappa^{*}\approx13.55$. FA-only performs better when
$\kappa<\kappa^{*}$ because the stronger scattered component provides
more effective port-selection diversity. In contrast, PA-only performs
better when $\kappa>\kappa^{*}$, where the stronger LoS component
increases the benefit of PA-enabled link-geometry adjustment. HPFAS
consistently achieves the lowest outage probability by jointly exploiting
both mechanisms.

\Cref{fig:c3_N} depicts the outage probability versus the number of FA
ports $N$. The analytical results closely match the simulation results.
Since PA-only does not employ FA port selection, its outage probability
is independent of $N$, whereas the performance of FA-only and HPFAS
improves as $N$ increases. The PA-only and FA-only curves intersect at
$N^{*}\approx5.93$. Therefore, PA-only performs better for
integer-valued $N\leq5$, while FA-only performs better for $N\geq6$
because of the increasing port-selection diversity. HPFAS maintains the
best outage performance over the considered range.

\Cref{fig:c3_D2} shows the outage probability versus the transverse
user-region length $D_2$. The outage probability increases with $D_2$
because of the increased propagation distance. The PA-only and FA-only
curves intersect at $D_2^{*}\approx22.61$~m. PA-only performs better
when $D_2<D_2^{*}$, where PA-enabled link-geometry adjustment provides
the dominant gain. As $D_2$ increases, the transverse distance becomes
more significant and reduces the relative benefit of PA adjustment.
Consequently, FA-only performs better when $D_2>D_2^{*}$, while HPFAS
consistently achieves the lowest outage probability.

\endgroup

\begingroup
\color{red}

\section{Conclusion}

This letter investigates HPFAS and analyzes its outage performance in
both single-user and multi-user interference scenarios. By modeling
PA-enabled link-geometry adjustment and FA-assisted spatial diversity,
exact integral-form outage probability expressions are derived and
validated via Monte Carlo simulations. High-SNR asymptotic analysis is
further conducted to characterize the diversity behavior of the
considered schemes. Numerical results show that the proposed HPFAS
consistently outperforms the PA-only and FA-only schemes over a wide
range of system parameters and maintains significant gains in the
interference-limited multi-user scenario. The comparisons among them are also provided to clarify the roles of the two spatial
reconfiguration mechanisms and offer scheme-selection insights.
Moreover, the SFA method significantly reduces computational complexity
while maintaining satisfactory approximation accuracy, particularly in
the single-user case.

\endgroup

 \bibliographystyle{IEEEtran}
 \bibliography{Reference}
\end{document}